\DeclareMathOperator*{\argmax}{\arg\!\max}
\DeclareMathOperator*{\diag}{diag}
\newtheorem{theorem}{Theorem}
\newtheorem{lemma}{Lemma}
\newtheorem{definition}{Definition}
\newtheorem{proposition}{Proposition}
\newtheorem{remark}{Remark}
\newtheorem{corollary}{Corollary}
\title{Heterogeneous Differential Privacy}
\author{Mohammad Alaggan, S\'ebastien Gambs, Anne-Marie Kermarrec}
\begin{document}
\maketitle

\begin{abstract}
The massive collection of personal data by personalization systems has rendered the preservation of privacy of individuals more and more difficult. Most of the proposed approaches to preserve privacy in personalization systems usually address this issue uniformly across users, thus ignoring the fact that users have different privacy attitudes and expectations (even among their own personal data). In this paper, we propose to account for this non-uniformity of privacy expectations by introducing the concept of heterogeneous differential privacy. This notion captures both the variation of privacy expectations among users as well as across different pieces of information related to the same user. We also describe an explicit mechanism achieving heterogeneous differential privacy, which is a modification of the Laplacian mechanism by Dwork, McSherry, Nissim, and Smith. In a nutshell, this mechanism achieves heterogeneous differential privacy by manipulating the sensitivity of the function using a linear transformation on the input domain.
Finally, we evaluate on real datasets the impact of the proposed mechanism with respect to a semantic clustering task. The results of our experiments demonstrate that heterogeneous differential privacy can account for different privacy attitudes while sustaining a good level of utility as measured by the recall for the semantic clustering task.
\end{abstract}
\section{Introduction}  
The amount of personal information about individuals exposed on the Internet is increasing by the second. While such data may be used for recommendation and personalization purposes \cite{DBLP:conf/middleware/BertierFGKL10,Zeng:2012:UID:2189736.2189749,ZhouEtAl2012,Wen:2010:AOI:1772690.1772875,1264820}, this also raises serious privacy concerns. At first, leveraging personal information to enhance the user experience through personalization services might seems contradictory with the preservation of the privacy of users of such systems. However in recent years, several approaches have been proposed to rely on Privacy-Enhancing Technologies (PETs), whose aim is to preserve privacy while maintaining a good level of utility for the proposed personalization service \cite{springerlink:10.1007/s11257-011-9110-z,DBLP:conf/opodis/AlagganGK11,AGK12,McSherry:2009:DPR:1557019.1557090,springerlink:10.1007/978-0-387-70992-5_4}. One popular approach whose objective is to provide strong privacy guarantees despite auxiliary information that the adversary could have is the concept of differential privacy \cite{DBLP:conf/tamc/Dwork08,DBLP:conf/crypto/MironovPRV09,limits-two-party-dp-2011-MMPRTV,McSherry:2009:DPR:1557019.1557090,DworkNaor2010,DBLP:conf/tcc/DworkMNS06,DBLP:journals/corr/abs-1103-2626}. 
 
Most of these approaches implicitly assume homogeneity by considering that users have uniform privacy requirements. However, in an environment composed of a myriad of communities, such as the Internet, it is highly plausible that users have heterogeneous privacy attitudes and expectations. For instance, consider a collaborative social platform in which each user is associated to a profile (\emph{e.g.}, a set of URLs that a user has tagged in a system such as Delicious\footnote{\url{http://del.icio.us/}}). It is natural to expect that for a particular user some items in his profile are considered more sensitive by him than others, thus calling for a system that can deal with different privacy requirements across items. Similarly, Alice might be more conservative about her privacy than Bob, requiring different privacy requirements across users. 

This non-uniformity of privacy attitudes has been acknowledged by major social networking sites~\cite{Preibusch09privacy-preservingfriendship,IMC/Gummadi11}. For instance in Facebook, a user can now set individual privacy settings for each item in his profile. However in this particular example, privacy is mainly addressed by restricting, through an access-control mechanism, who is allowed to access and view a particular piece of information. Our approach can be considered to be orthogonal but complementary to access-control. More precisely, we consider a personalized service, such as recommendation algorithm, and we enforce the privacy requirements of the user on its output. Heterogeneous privacy requirements might also arise with respect to pictures, depending on the location in which the picture was taken or the persons appearing on it\cite{IMC/Gummadi11}. In the future, users are likely to expect item-grained privacy for other services\footnote{Note that systems supporting item-grained privacy can also provide user-grained privacy (\emph{i.e.}, for instance by setting the privacy level of all items in some user's profile to the same value in the privacy setting of this user), and therefore the former can be considered as a generalization of the latter. However, this assumes that the privacy weights have a global meaning across the entire system, and are not defined only relative to a user.}.

Furthermore, as highlighted by Zwick and Dholakia in 1999 \cite{Zwick99modelsof} and as evidenced by anthropological research, privacy attitudes are highly dependent on social and cultural norms.  A similar point was raised in 2007 by Zhang and Zhao in a paper on privacy-preserving data mining \cite{ZhangZhao2007} in which they mentioned that in practice it is unrealistic to assume homogeneous privacy requirements across a whole population. In particular, their thesis is that enforcing the same privacy level across all users and for all types of personal data could lead to an unnecessary degradation of the performance of such systems as measured in terms of accuracy.  More specifically, enforcing the same privacy requirements upon all users (even those who do not require it) might degrade the performance in comparison to a system in which strict privacy requirements are only taken into account for those who ask for it. The same type of argument can also be made for different items of the same user. Hence, designing a system supporting heterogeneous privacy requirements could lead to a global improvement of the performance of this system as compared to a homogeneous version. Therefore, the main challenge is to be able to account for the variety of privacy requirements when leveraging personal data for recommendation and personalization.

In this paper, we address this challenge through the introduction of the concept of \emph{heterogeneous differential privacy}, which considers that the privacy requirements are not homogeneous across users and items from the same user (thus providing item-grained privacy). This notion can be seen as an extension of the concept of differential privacy \cite{DBLP:conf/tamc/Dwork08} introduced originally by Dwork in the context of databases. We also describe an explicit mechanism achieving heterogeneous differential privacy, which we coin as the ``stretching mechanism''. We derive a bound on the distortion introduced by our mechanism, which corresponds to a distance between the expected output of the mechanism and the original value of the function to be computed.  Finally, we conduct an experimental evaluation of our mechanism on a semantic clustering task using real datasets. The results obtained show that the proposed approach can still sustain a high utility level (as measured in terms of recall) while guaranteeing heterogeneous differential privacy.

The outline of the paper is as follows. First, in Section~\ref{sec:background}, we describe the background of differential privacy as well as some preliminaries on matrices and sets necessary to understand our work. Afterwards in Section~\ref{sec:heter-diff-priv}, we introduce the novel concept of heterogeneous differential privacy along with the description of an explicit mechanism achieving it. Then, we assess experimentally the impact of the proposed mechanism by evaluating it on a semantic clustering task in Section~\ref{sec:hdp-practice}. In Section~\ref{sec:related-work}, we present the related work on heterogeneous privacy mechanisms before concluding with a discussion on the actual limitations of the approach as well as possible extensions in Section~\ref{sec:conclusion}.

\section{Background}
\label{sec:background}

In this section, we briefly introduce the background on differential privacy (Section~\ref{diff_privacy}) as well as some notions on matrices and sets that are necessary to understand the concept of heterogeneous differential privacy (Section~\ref{prelim}).

\subsection{Differential Privacy}
\label{diff_privacy}

We begin with providing some background of differential privacy, which was originally introduced by Dwork  \cite{DBLP:conf/tamc/Dwork08} in the context of statistical databases.
The main guarantee provided by this approach is that if a differentially private mechanism is applied on a database composed of the personal data of individuals, no output would become significantly more (or less) probable whether or not a participant removes this particular data from the dataset. In a nutshell, it means that for an adversary observing the output of the mechanism, the advantage gained from the presence (or absence) of a particular individual in the database is negligible. This statement is a statistical property about the behavior of the mechanism (\emph{i.e.}, function) and holds independently of the auxiliary knowledge that the adversary might have gathered. More specifically, even if the adversary knows the whole database but one individual row, a mechanism satisfying differential privacy still protects the privacy of this row. The parameter $\varepsilon$ is public and may take different values depending on the application (for instance it could be $0.01$, $0.1$, $0.25$ or even $2$). While it is sometimes difficult to grasp the intuition about the significance of a particular value for $\varepsilon$ \cite{DBLP:conf/isw/LeeC11}, a smaller value of $\varepsilon$ implies a higher privacy level.

Differential privacy was originally designed for ensuring privacy to individuals who have contributed with their personal data to the construction of a statistical database. In this setting, each individual is a row (\emph{i.e.}, coordinate) in this database (\emph{i.e.}, vector). Differential privacy guarantees that \emph{almost} no difference will be observed to the output of query performed on the database, whether or not the individual (a single row) has contributed to the database by submitting his data, and therefore this information is considered as being protected. 

When the database is the profile of a user, which is a vector of items (sometimes called \emph{the micro-data setting}), the whole vector (\emph{i.e.}, database) is owned by a single individual. This difference impacts the interpretation that can be done when speaking about \emph{protecting the privacy of this individual}. In particular, contrary to the first setting of statistical database, an individual does not have the choice to submit or not his data. Rather, if he chooses not to use his profile as input to the collaborative social system, then he will not benefit from the service. However in this new setting, the user is still left with the possibility of selecting a subset of items in his profile before participating. In this case, the main objective of differential privacy is to ensure that when a user adds or removes a single item from his profile, this has a small effect on the output of the computation. However, one caveat is that if the profile of the user contains nothing but items related to a particular sensitive topic (\emph{e.g.}, cancer), then in order to get at least a little bit of utility that information has to be leaked. This observation is in line with the impossibility result of Dwork and Naor stating that if a privacy-preserving mechanism provides any utility, then it has to cause a privacy breach whose magnitude is at least proportional to the min-entropy of the utility \cite{DworkNaor2010}. Thus, this limitation is true for any possible privacy-preserving mechanism and is not inherent to the micro-data setting (\emph{i.e.}, this limitation also holds for the database setting).

The difference of a single row between two profiles can be defined formally through the concept of \emph{neighboring profiles}.
Each user is associated with a profile representing his personal data, which can be defined as a vector in $\mathbb{R}^n$ (for some $n$ fixed for all users across the system). This representation is generic enough to encompass a variety of possible user profiles. For instance, restricting the domain to $\{0,1\}^n$ can be used to represent a binary string (which is a universal representation) or a subset of items of a global domain of items.

\begin{definition}[Neighboring profile]
\label{def:neigh-prof}
Two profiles $\vec d , \vec d^{(i)} \in \mathbb{R}^n$ are said to be neighbors if there exists an item $i \in \{1,\dots,n\}$ such that $d_k=d^{(i)}_k$ for all items $k\neq i$. This neighborly relation is denoted by $\vec d \sim \vec d^{(i)}$. 
\end{definition}
An equivalent definition states that $\vec d$ and $\vec d^{(i)}$ are neighbors if they are identical except for the $i$-th coordinate. For instance, the profiles $(0,1,2)$ and $(0,2,2)$ are neighbored while the profiles $(0,1,2)$ and $(0,2,3)$ are not. Differential privacy can be defined formally in the following manner.

\begin{definition}[$\epsilon$-differential privacy \cite{DBLP:conf/tamc/Dwork08}]
A randomized function $\mathcal M: \mathbb{R}^n \to \mathbb{R}$ is said to be $\varepsilon$-differentially private if for all neighboring profiles $\vec d \sim \vec d^{(i)} \in \mathbb{R}^n$, and for all outputs $t \in \mathbb{R}$ of this randomized function, the following statement holds: 
\begin{equation}
\Pr[\mathcal M(\vec d) = t] \leqslant \exp(\varepsilon) \Pr[\mathcal M(\vec d^{(i)}) = t] \enspace ,
\end{equation}
 in which $\exp$ refers to the exponential function.
\end{definition}

Differential privacy aims at reducing the contribution that any single coordinate of the profile can have on the output of a function. The maximal magnitude of such contribution is captured by the notion of (global) \emph{sensitivity}.

\begin{definition}[Global sensitivity \cite{DBLP:conf/tcc/DworkMNS06}]
The global sensitivity $S(f)$ of a function $f$ is the maximum absolute difference obtained on the output over all neighboring profiles:  
\begin{equation}
S(f) = \max\limits_{\vec d \sim \vec d^{(i)}} | f(\vec d) - f(\vec d^{(i)}) | \enspace ,
\end{equation}
 where $\vec d \sim \vec d^{(i)}$ means that $\vec d$ and $\vec d^{(i)}$ are neighboring profiles (\emph{cf.} Definition~\ref{def:neigh-prof}). 
\end{definition}

Dwork proposed a technique called the \emph{Laplacian mechanism} \cite{DBLP:conf/tcc/DworkMNS06} that achieves $\varepsilon$-differential privacy by adding noise to the output of a function proportional to its global sensitivity.  The noise is distributed according to the Laplace distribution (with PDF $\frac{1}{2\sigma}\exp(-|x|/\sigma)$, in which $\sigma=S(f)/\varepsilon$ is a scale parameter).

The novel mechanism that we propose in this paper (to be detailed later) achieves heterogeneous differential privacy by modifying the sensitivity of the function to be released (and therefore the function itself) before applying the standard Laplacian mechanism.

\subsection{Preliminaries}
\label{prelim}

Before delving into the details of our approach, we need to briefly introduce some preliminary notions on matrices and sets such as the concept of \emph{shrinkage matrix} \cite{matOp10}. A shrinkage matrix is a linear transformation that maps a vector to another vector with less magnitude, possibly distorting it by changing its direction.

\begin{definition}[Shrinkage matrix]
A matrix $A$ is called a shrinkage matrix if and only if $A=\diag(\alpha_1,\dots,\alpha_n)$ such that each diagonal coefficient is in the range $0 \leqslant \alpha_i \leqslant 1$.
\end{definition}

For example, the matrix 
\begin{equation}
\left( \begin{array}{ccc}
0.7 & 0 & 0 \\
0 & 0.3 & 0 \\
0 & 0 & 1 \end{array} \right)
\end{equation}
is a shrinkage matrix.

\begin{definition}[Semi-balanced set]
A set $D\subseteq\mathbb{R}^n$ of column vectors is semi-balanced if and only if for all shrinkage matrices $A=\diag(\alpha_1,\dots,\alpha_n)$, and for all $\vec x \in D$, we have $A \vec x \in D$.
\end{definition}
For instance, the set 
\begin{equation}
\{\vec x =(x_1,x_2) \in \mathbb R^2 \mid 0 < x_1, x_2 < 1 \}
\end{equation}
 is a semi-balanced set that can be visualized as a square from $(0,0)$ to $(1,1)$ in the Euclidean plane.

\section{Heterogeneous Differential Privacy}
\label{sec:heter-diff-priv}

In this section, we introduce the novel concept of \emph{heterogeneous differential privacy} (HDP). We start by giving the necessary definitions in Section~\ref{sec:hdp-definitions}, before describing in Section~\ref{sec:stretching-mechanism} how to construct the Stretching Mechanism, which ensures heterogeneous differential privacy. More precisely, we first detail how to construct the privacy-preserving estimator in Section~\ref{sec:constr-estim}. Afterwards, we discuss why and how the privacy vector expressing the privacy expectations of a user should also be kept private in Section~\ref{sec:privacy-psi-privacy}. Finally, an upper bound on the distortion induced by the Stretching Mechanism is provided in Section~\ref{sec:distortion}. 

\subsection{Definitions}
\label{sec:hdp-definitions}

We now define HDP-specific notions such as the concept of \emph{privacy vector}, which is a key notion in HDP. This vector contains the privacy requirements of each coordinate (\emph{i.e.}, item) in the input profile (\emph{i.e.}, vector) of a user, and is defined as follows.

\begin{definition}[Privacy vector] \label{def:v} Given a profile $\vec d \in D$ in which $D$ is a semi-balanced set of column vectors composed of $n$ coordinates, let ${\vec v} \in [0,1]^n$ be the privacy vector associated with the profile $\vec d$. The owner of item $d_i$ is responsible for choosing the privacy weight $v_i$ associated to this item (by default $v_i$ is set to be $1$ if it was not explicitly specified by the owner). A privacy weight $v_i$ of zero corresponds to \emph{absolute privacy} while a value of $1$ refers to \emph{standard privacy}, which in our setting directly correspond to the classical definition of $\varepsilon$-differential privacy.
\end{definition}

The mere presence of the privacy vector introduces potential privacy breaches, thus this vector should also be protected. Thus, we need to ensure that in addition to the profile, the privacy vector $\vec v$ also remains private, such that each entry $v_i$ of this vector should only be known by its owner. Otherwise, the knowledge of a privacy weight of a particular item might leak information about the profile itself. For instance, learning that some items have a high privacy weight may reveal that the user has high privacy expectations for and is therefore interested in this specific type of data. We define \emph{heterogeneous differential privacy} in the following manner.

\begin{definition}[$(\varepsilon,\vec v)$-differential privacy]
 A randomized function $\mathcal M: D \to \mathbb{R}$ is said to be $(\varepsilon,\vec v)$-differentially private if for all items $i$, for all neighboring profiles $\vec d\sim\vec d^{(i)}$, and for all possible outputs $t \in \mathbb{R}$ of this function, the following statement holds:
\begin{equation}
\Pr[ \mathcal M (\vec d) = t ] \leqslant \exp(\varepsilon v_i) \Pr [ \mathcal M (\vec d^{(i)}) = t ], \enspace
\end{equation}
in which $\exp$ refers to the exponential function.
\end{definition}

Since a privacy weight $v_i \leqslant 1$, heterogeneous differential privacy implies the standard notion of $\varepsilon$-differential privacy as shown by the following remark. 

\begin{remark}[Equivalence of $(\varepsilon,\vec v)$-DP and $\varepsilon$-DP.]
  Let $\overline\varepsilon = \varepsilon \overline v$ and $\underline\varepsilon = \varepsilon \underline v$, such that $\overline v=\max_i v_i$ (the maximum privacy weight) and $\underline v=\min_i v_i$ (the minimum privacy weight). Then, we have:
$ \underline\varepsilon\mbox{-DP} \implies (\varepsilon,\vec v)\mbox{-DP} $ and $ (\varepsilon,\vec v)\mbox{-DP} \implies \overline\varepsilon\mbox{-DP}$. As a consequence, $(\varepsilon,\vec 1)\mbox{-DP}$ holds \emph{if and only if} $\varepsilon$-DP also holds, in which $\vec 1=(1,\cdots,1)$.
\end{remark}

Finally, we rely on a variant of the notion of global sensitivity, implicitly introduced \cite[Lemma~1]{DBLP:journals/corr/abs-1111-2885}, that we call \emph{modular global sensitivity}. 

\begin{definition}[Modular global sensitivity \cite{DBLP:journals/corr/abs-1111-2885}]
The modular global sensitivity $S_i(f)$ is the global sensitivity of $f$ when $\vec d$ and $\vec d^{(i)}$ are neighboring profiles that differ on exactly the item $i$.
\end{definition}

In a nutshell, the modular global sensitivity reflects the maximum difference that a \emph{particular item} $i$ can cause by varying its value (over its entire domain) while keeping all other items fixed.

\subsection{The Stretching Mechanism}
\label{sec:stretching-mechanism}

Thereafter, we describe a generic mechanism achieving heterogeneous differential privacy that we coin as the \emph{Stretching Mechanism}. We assume that the privacy preferences for each item are captured through a privacy vector $\vec v$ (\emph{cf.} Definition~\ref{def:v}). Given an arbitrary \emph{total} function $f: D \to \mathbb{R}$, in which $D$ is a semi-balanced set of columns vectors of $n$ coordinates, and whose global sensitivity $S(f)$ is finite, we construct a randomized function $\hat{f}(\vec d,\vec v,\varepsilon)$ \emph{estimating} $f$ while satisfying $(\varepsilon,\vec v)$-differential privacy.

Before delving into the details of this method, we provide a little intuition on how and why it works. 
A lemma in \cite[Lemma~1]{DBLP:journals/corr/abs-1111-2885} asserts
that the Laplacian mechanism $\mathcal{M}(\vec d)=f(\vec
d)+\mathsf{Lap}(\sigma)$ with mean $0$ and standard deviation $\sigma$
provides 
\begin{equation}
  \Pr[\mathcal{M}(\vec d) = t] \leqslant \exp(\varepsilon_i)
\Pr[\mathcal{M}(\vec d^{(i)}) = t] \enspace ,
\end{equation}
 where $\varepsilon_i=S_i(f)/\sigma$. In other words, differential privacy can be achieved by setting the perturbation induced by the Laplacian mechanism to be proportional to the modular global sensitivity \cite{DBLP:journals/corr/abs-1111-2885} instead of the standard global sensitivity. Therefore, a natural approach for enforcing heterogeneous differential privacy is to manipulate the modular global sensitivity $S_i(f)$ by modifying the function $f$ itself.

\subsubsection{Constructing the Estimator}
\label{sec:constr-estim}

Let $T : [0,1]^n \to \mathbb R^{n\times n}$ be a function taking as input a privacy vector $\vec v$ and returning as output a shrinkage matrix, with the property that $T(\vec 1)=I$, such that $I$ is the identity matrix and $\vec 1=(1,\cdots,1)$.  Let  also $R$ be a mapping sending a function $f: D \to \mathbb R$ and a privacy vector $\vec v\in \mathbb [0,1]^n$ to the function $R(f,\vec v):D\to\mathbb R$ with $R(f,\vec v)(\vec d)=f(T(\vec v) \cdot \vec d)$. Recall that the Laplace distribution centered at 0 with scale parameter $\sigma$ has the following probability density function
\begin{equation}
h(x) = \frac{1}{2\sigma}\exp(-|x|/\sigma) \enspace .\label{eq:h}
\end{equation}
Finally, let $N$ be a Laplacian random variable with parameter $\sigma=\sigma(f,\varepsilon)=S(f)/\varepsilon$, in which $S(f)$ refers to the global sensitivity of the function $f$
and $\varepsilon$ the privacy parameter. The following statement proves that this \emph{Stretching Mechanism} $R$ satisfies heterogeneous differential privacy.

\begin{theorem}[Achieving HDP via stretching mecanism]\label{thm:hdp}
Given a privacy vector $\vec v$, if the function $T(\vec v)$ satisfies $S_i(R(f,\vec v)) \leqslant v_i S(f)$ then the randomized function $\hat{f}(\vec d,\vec v,\varepsilon) = R(f,\vec v)(\vec d) + N$ satisfies $(\varepsilon,\vec v)$-differential privacy.
\begin{proof}
See Appendix.
\end{proof}
\end{theorem}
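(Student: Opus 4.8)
The plan is to reduce the claim to the per-item Laplacian bound quoted in the intuition above (namely [Lemma~1] of \cite{DBLP:journals/corr/abs-1111-2885}), applied not to $f$ itself but to the stretched function $g := R(f,\vec v)$. The key observation is that $\hat f(\vec d,\vec v,\varepsilon) = g(\vec d) + N$ is \emph{exactly} the Laplacian mechanism applied to $g$ with scale $\sigma = S(f)/\varepsilon$. Once the lemma is available for $g$, the theorem follows from a short inequality chase together with the monotonicity of $\exp$.

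First I would verify that $g = R(f,\vec v)$ is a well-defined total function on $D$, which is precisely where the semi-balanced hypothesis on $D$ enters. By definition $g(\vec d) = f(T(\vec v)\cdot\vec d)$, and since $T(\vec v)$ is a shrinkage matrix and $D$ is semi-balanced, we have $T(\vec v)\cdot\vec d \in D$ for every $\vec d \in D$; hence $f$ is always evaluated inside its own domain and $g:D\to\mathbb R$ is total. This closure property is exactly what the notion of semi-balanced set was introduced for, and without it the composite $f(T(\vec v)\cdot\vec d)$ need not be defined.

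Next I would apply [Lemma~1] to $g$ with Laplace scale $\sigma = S(f)/\varepsilon$. For any item $i$, any neighboring profiles $\vec d \sim \vec d^{(i)}$, and any output $t$, the lemma gives
\begin{equation}
\Pr[\hat f(\vec d,\vec v,\varepsilon) = t] \leqslant \exp(\varepsilon_i)\,\Pr[\hat f(\vec d^{(i)},\vec v,\varepsilon) = t], \qquad \varepsilon_i = \frac{S_i(g)}{\sigma} \enspace,
\end{equation}
where, as is standard for continuous mechanisms, $\Pr[\,\cdot = t\,]$ denotes the value of the output density at $t$. It then remains only to bound $\varepsilon_i$. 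Invoking the theorem's hypothesis $S_i(R(f,\vec v)) \leqslant v_i S(f)$ together with $\sigma = S(f)/\varepsilon$,
\begin{equation}
\varepsilon_i = \frac{S_i(g)}{\sigma} \leqslant \frac{v_i S(f)}{S(f)/\varepsilon} = \varepsilon v_i \enspace.
\end{equation}
Since $\exp$ is increasing, $\exp(\varepsilon_i) \leqslant \exp(\varepsilon v_i)$, and substituting into the displayed bound yields exactly the defining condition of $(\varepsilon,\vec v)$-differential privacy.

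There is no deep obstacle once the cited lemma is granted; the only genuinely load-bearing step is the domain-closure argument via semi-balancedness, without which the entire construction collapses. If one preferred a self-contained proof, I would instead establish the per-item bound directly: write the density ratio $h(t - g(\vec d))/h(t - g(\vec d^{(i)}))$, bound its logarithm by $|g(\vec d) - g(\vec d^{(i)})|/\sigma$ using the triangle inequality in the Laplace exponent $h$ of \eqref{eq:h}, and then note that $|g(\vec d) - g(\vec d^{(i)})| \leqslant S_i(g)$ by the very definition of the modular global sensitivity (since $\vec d$ and $\vec d^{(i)}$ differ only on coordinate $i$). The same two inequalities above then complete the argument.
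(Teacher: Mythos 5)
Your proposal is correct and takes essentially the same route as the paper: the paper's appendix proof is precisely the self-contained variant you sketch at the end (bound the Laplace density ratio $h(t-R(f,\vec v)(\vec d))/h(t-R(f,\vec v)(\vec d^{(i)}))$ by $\exp(\varepsilon\lvert R(f,\vec v)(\vec d)-R(f,\vec v)(\vec d^{(i)})\rvert/S(f))$, then use $\lvert R(f,\vec v)(\vec d)-R(f,\vec v)(\vec d^{(i)})\rvert \leqslant S_i(R(f,\vec v)) \leqslant v_i S(f)$). Your primary framing via the cited Lemma~1 applied to the stretched function is merely a repackaging of that same computation, and your observation that semi-balancedness of $D$ guarantees $T(\vec v)\cdot\vec d \in D$ (so that $R(f,\vec v)$ is total) is correct and left implicit in the paper's proof.
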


In a nutshell, $T(\vec v)$ is a shrinkage matrix, whose shrinking factor in each coordinate is computed independently of all other coordinates. More precisely, the shrinking factor for a particular item depends only on the privacy weight associated to this coordinate. The value used by the mechanism is the lowest amount of shrinkage (\emph{i.e.}, distortion) still achieving the target modular global sensitivity of that coordinate. In the following section we provide an explicit construction of $T(\vec v)$ for which we prove that by Lemma~\ref{lem:gsi_leq_psii_gs} the condition of Theorem~\ref{thm:hdp} is satisfied, and therefore that $\hat f$ achieves $(\varepsilon,\vec v)$-differential privacy. 

\subsubsection{Computing the Shrinkage Matrix}
\label{sec:constructing-tv}

The HDP mechanism $\hat f(\vec d,\vec v, \varepsilon)$ adds Laplacian noise to a modified function $R(f,\vec v)(\vec d)=f(T(\vec v)\cdot \vec d)$. In this section, we specify how to construct $T(\vec v)$ such that $\hat f$ satisfies HDP. Thereafter, we use $R$ to denote $R(f,\vec v)$ for the sake of simplicity.  Let $T(\vec v)=\diag(\vec w)$ for some $\vec w \in [0,1]^n$ to be computed from the privacy vector $\vec v$ and $S(R,\vec w)$ be the sensitivity of \mbox{$R=f(T(\vec v)\cdot \vec d)=f(\diag(\vec w)\cdot \vec d)$} given $\vec w$. Similarly, let $S_i(R,\vec w)$ be the modular global sensitivity of $R$ given $\vec w$. We denote by $(\vec w_{-i},w_i^\prime)$ the vector resulting from replacing the item $w_i$ in $\vec w$ to $w_i^\prime$ (\emph{e.g.}, $(\vec 1_{-i},w_i)=(1,\dots,w_i,\dots,1)$). Each $w_i$ can be computed from $v_i$ by solving the following optimization problem:
  \begin{equation}
  \begin{array}{lc}
    \max & w_i \enspace , \\
    \mbox{subject to: } & S_i(R,(\vec 1_{-i},w_i)) \leqslant v_i S(f) \enspace.
  \end{array}\label{eq:opt}  
\end{equation}

Note that a solution satisfying this constraint always exists and is reached by setting $w_i$ to $0$. The $w_i$'s are never released after they have been computed locally by the rightful owner, and the modular global sensitivity $S_i(R)$ is only used in the proof and is not revealed to the participants, in the same manner as the noise generated. The participants only have the knowledge of the global sensitivity $S(f)$.
Thus, the only way in which the profile $\vec d$ could leak is through its side effects to the output, which we prove to achieve $\varepsilon$-DP in Theorem~\ref{thm:priv-priv-sett}.

\begin{lemma}
  \label{lem:gsi_leq_psii_gs}
  If $T(\vec v)=\diag(\vec w)$ such that for all $i$: 
  \begin{equation}
 S_i(R,(\vec 1_{-i},w_i)) \leqslant v_i S(f)
\end{equation}
 (the constraint of \eqref{eq:opt}) then $R$ satisfies: 
  \begin{equation}
 S_i(R, \vec w)\leqslant v_i S(f)
\end{equation} 
for all $i$.
\begin{proof}
See Appendix.
\end{proof}
\end{lemma}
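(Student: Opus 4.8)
The plan is to reduce the full-weight bound to the single-coordinate bound that appears in the hypothesis, i.e., to establish that
\[
S_i(R,\vec w) \leqslant S_i(R,(\vec 1_{-i},w_i))
\]
for every $i$. Combining this inequality with the assumed constraint $S_i(R,(\vec 1_{-i},w_i)) \leqslant v_i S(f)$ then yields the claim immediately. The intuition is that passing from the weight vector $(\vec 1_{-i},w_i)$ to the full vector $\vec w$ only shrinks the coordinates $j\neq i$, and such shrinking cannot enlarge the variation caused by moving coordinate $i$ alone, precisely because the domain $D$ is semi-balanced.

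The key algebraic observation I would use is the factorization
\[
\diag(\vec w) = \diag(\vec 1_{-i},w_i)\cdot\diag(\vec w_{-i},1),
\]
in which $\diag(\vec w_{-i},1)$ is a shrinkage matrix (its diagonal entries are the $w_j\in[0,1]$ for $j\neq i$ and $1$ in position $i$). Fix any neighboring pair $\vec d \sim \vec d^{(i)}$ in $D$ that is admissible for $S_i(R,\vec w)$, and set $\vec e = \diag(\vec w_{-i},1)\vec d$ and $\vec e^{(i)} = \diag(\vec w_{-i},1)\vec d^{(i)}$. Since $\diag(\vec w_{-i},1)$ is a shrinkage matrix and $D$ is semi-balanced, both $\vec e$ and $\vec e^{(i)}$ lie in $D$.

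The crucial point---the step I expect to carry the real weight of the argument---is that $\vec e$ and $\vec e^{(i)}$ remain neighbors differing only in coordinate $i$. Indeed, $\diag(\vec w_{-i},1)$ scales coordinate $i$ by $1$ and each coordinate $j\neq i$ by $w_j$; since $\vec d$ and $\vec d^{(i)}$ already agree on every $j\neq i$, so do $\vec e$ and $\vec e^{(i)}$, while their $i$-th entries are left untouched, so $\vec e \sim \vec e^{(i)}$. Applying the factorization gives $\diag(\vec w)\vec d = \diag(\vec 1_{-i},w_i)\vec e$ and likewise for $\vec d^{(i)}$, whence
\[
|f(\diag(\vec w)\vec d)-f(\diag(\vec w)\vec d^{(i)})| = |f(\diag(\vec 1_{-i},w_i)\vec e)-f(\diag(\vec 1_{-i},w_i)\vec e^{(i)})| \leqslant S_i(R,(\vec 1_{-i},w_i)).
\]
Taking the maximum over all admissible neighboring pairs $\vec d \sim \vec d^{(i)}$ yields $S_i(R,\vec w) \leqslant S_i(R,(\vec 1_{-i},w_i))$, and the hypothesis finishes the proof. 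The only subtlety to verify carefully is that the constructed pair $(\vec e,\vec e^{(i)})$ is a legitimate competitor for the right-hand-side sensitivity---that it is a genuine neighboring pair lying in $D$---which is exactly what the semi-balanced property together with the unit scaling on coordinate $i$ guarantees.
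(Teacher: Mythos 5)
Your proof is correct, and it reaches the paper's key inequality $S_i(R,\vec w)\leqslant S_i(R,(\vec 1_{-i},w_i))$ by a genuinely different route. The paper derives it from a chain of auxiliary results: a general principle that maximization over a smaller domain yields a smaller optimum (Proposition~\ref{proposition:maximization_on_smaller_domain}), the composition property $ABD\subseteq BD\subseteq D$ for shrinkage matrices (Lemma~\ref{lem:two_shrinking_matrices}), monotonicity of the \emph{global} sensitivity $S(R,\vec w)$ in $\vec w$ (Lemma~\ref{lemma:w1_leq_w2_gw1_leq_gw2}), and finally a proof by contradiction transferring that monotonicity to the modular sensitivities (Corollary~\ref{cor:w1_leq_w2_giw1_leq_giw2}), applied with $\vec w\leqslant(\vec 1_{-i},w_i)$. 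You instead work directly at the level of the modular sensitivity: the factorization $\diag(\vec w)=\diag(\vec 1_{-i},w_i)\cdot\diag(\vec w_{-i},1)$ maps any admissible neighboring pair for $S_i(R,\vec w)$ to an admissible neighboring pair for $S_i(R,(\vec 1_{-i},w_i))$ with the same objective value, with semi-balancedness guaranteeing membership in $D$ and the unit entry in position $i$ guaranteeing that neighborliness at item $i$ is preserved---and you verify both of these points, which are indeed the only places the argument could break. The paper's route buys a reusable monotonicity statement for the global sensitivity, but your direct argument is more elementary and in fact more watertight: the contradiction step in Corollary~\ref{cor:w1_leq_w2_giw1_leq_giw2} as written only controls the maximizing coordinate $i^*=\argmax_i S_i(R,\vec w)$, since for $i\neq i^*$ the modular sensitivity $S_i(R,\vec w^\prime)$ could a priori increase while remaining below $S(R,\vec w)$, yielding no contradiction; your pairwise mapping proves the needed bound for every coordinate $i$ separately, so it also serves to close that gap.
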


\subsubsection{Hiding the Privacy Vector}
\label{sec:privacy-psi-privacy}
 
By themselves, the privacy weights could lead to a privacy breach if there are release publicly \cite{DBLP:conf/sigecom/GhoshR11,DBLP:journals/corr/abs-1111-2885}. For instance, learning that the user has set a high weight on a particular item might be indicated that the user possesses this item on his profile and that he has a high privacy expectation about it. Thus, the impact of the privacy weights on the observable output of the mechanism should be characterized. 
The following theorem states that when the profile $\vec d$ is fixed, the randomized function $\hat{f}$ satisfies $\varepsilon$-differential privacy over neighboring privacy vectors $\vec v \sim\vec v^{(i)}$. Thus, the privacy vector can also be considered to be hidden and protected by the guarantees of differential privacy.
   
\begin{theorem}[Protecting the privacy vector with $\varepsilon$-DP]
\label{thm:priv-priv-sett}
The randomized function $\hat{f}$ provides $\varepsilon$-differential privacy for each individual privacy weight of $\vec v$. This means that for all neighboring privacy vectors $\vec v \sim \vec v^{(i)}$, for all outputs $t \in \mathbb R$ and profiles $\vec d$, the following statement holds:
\begin{equation}
 \Pr[\hat{f} (\vec d,\vec v,\varepsilon) = t] \leqslant \exp(\varepsilon) \Pr[\hat{f}(\vec d,\vec v^{(i)},\varepsilon) = t] \enspace .
\end{equation}
 \begin{proof}
 See Appendix.
 \end{proof}
\end{theorem}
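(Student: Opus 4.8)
The plan is to reduce the statement to a direct calculation with the Laplace density, exactly as in the analysis of the standard Laplacian mechanism, by showing that two neighboring privacy vectors induce two \emph{neighboring profiles} after the stretching transformation has been applied. First I would interpret $\Pr[\hat f(\vec d,\vec v,\varepsilon) = t]$ as the value of the Laplace density $h$ from \eqref{eq:h} evaluated at $t - f(T(\vec v)\cdot \vec d)$, since $\hat f(\vec d,\vec v,\varepsilon) = f(T(\vec v)\cdot\vec d) + N$ with $N$ Laplacian of scale $\sigma = S(f)/\varepsilon$ and the profile $\vec d$ held fixed. Forming the ratio of the two densities and cancelling the normalizing constant $\frac{1}{2\sigma}$ leaves $\exp\big( (|t - f(T(\vec v^{(i)})\cdot\vec d)| - |t - f(T(\vec v)\cdot\vec d)|)/\sigma \big)$, and the reverse triangle inequality bounds the numerator of the exponent by $|f(T(\vec v)\cdot\vec d) - f(T(\vec v^{(i)})\cdot\vec d)|$.

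The key step, which I expect to be the crux of the argument, is to show that $f(T(\vec v)\cdot\vec d)$ and $f(T(\vec v^{(i)})\cdot\vec d)$ differ by at most the global sensitivity $S(f)$. For this I would invoke the structure of the construction in Section~\ref{sec:constructing-tv}: since $T(\vec v)=\diag(\vec w)$ and each coordinate $w_j$ is obtained by solving the optimization problem \eqref{eq:opt}, whose constraint $S_j(R,(\vec 1_{-j},w_j))\leqslant v_j S(f)$ involves only $v_j$, the weight $w_j$ depends \emph{solely} on $v_j$. Consequently, when $\vec v$ and $\vec v^{(i)}$ are neighboring privacy vectors (differing only in coordinate $i$), the corresponding diagonal matrices $T(\vec v)$ and $T(\vec v^{(i)})$ differ only in their $i$-th diagonal entry.

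It then follows that the two stretched profiles $T(\vec v)\cdot\vec d$ and $T(\vec v^{(i)})\cdot\vec d$ agree in every coordinate except the $i$-th, so they are neighboring profiles in the sense of Definition~\ref{def:neigh-prof}; moreover, both lie in $D$ because $D$ is semi-balanced and $T(\vec v),T(\vec v^{(i)})$ are shrinkage matrices, so $f$ is defined at both points. By the definition of global sensitivity, $|f(T(\vec v)\cdot\vec d) - f(T(\vec v^{(i)})\cdot\vec d)| \leqslant S(f)$. Substituting this bound into the exponent and using $\sigma = S(f)/\varepsilon$ yields a ratio of at most $\exp(S(f)/\sigma) = \exp(\varepsilon)$, which is exactly the claimed inequality. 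The only subtlety to verify carefully is the coordinatewise independence of each $w_j$ on the remaining privacy weights; once that is established, the rest is the familiar Laplace-mechanism computation with the roles of the profile and the privacy vector interchanged.
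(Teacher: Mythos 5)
Your proposal is correct and follows essentially the same route as the paper's own proof: you reduce neighboring privacy vectors to neighboring stretched profiles via the observation that each $w_j$ depends only on $v_j$ (the paper's Proposition~\ref{prop:tvd_tvid_neigh}), note that both stretched profiles remain in $D$ by semi-balancedness (the paper's Proposition~\ref{cor:AD_in_D}), and conclude with the standard Laplace density-ratio computation bounded by $S(f)/\sigma = \varepsilon$. The only cosmetic difference is that the paper inserts the intermediate bound $\lvert f(\vec d_*) - f(\vec d_*^{(i)})\rvert \leqslant S_i(f) \leqslant S(f)$ through the modular global sensitivity, while you pass directly to $S(f)$; both are valid and equivalent here.
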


\subsubsection{Estimating the Distortion Induced by HDP}
\label{sec:distortion}

Intuitively, if $\underline w$ is the minimum of $\vec w$ (the diagonal of $T(\vec v)$), and $\vec d$ is the input profile, then the distortion introduced by \emph{stretching} the function as measured in terms of absolute additive error is bounded by $1-\underline w$ times the norm of $\vec d$ multiplied by the norm of the gradient of the \emph{semi-stretched} function at $\vec d$.

More formally, let $f$ be a continuous and differentiable function on a semi-balanced set $D$, and let $(\vec v, \vec d) \in [0,1]^n\times D$ be respectively, the privacy vector and the profile considered. The following theorem provides a bound on the distortion introduced on the output by modifying the global sensitivity of the function $f$ as done by the HDP (\emph{i.e.}, stretching) mechanism described previously.

\begin{theorem}[Bound on the distortion induced by the Stretching Mechanism]
\label{thm:distortion}
Let $f:D\to \mathbb R$ be a function from a semi-balanced set $D$ to the reals, and let $\vec v\in [0,1]^n$ be a privacy vector and $T:[0,1]^n \to \mathbb{R}^{n\times n}$ be a function taking a privacy vector to a shrinkage matrix. Finally, let $R$ be a mapping sending a function $f$ and a privacy vector $\vec v$ to the function $R(f,\vec v): D\to \mathbb R$ such that 
$R(f,\vec v)(\vec d)=f(T(\vec v) \cdot \vec d)$ for all vectors $\vec d$. The distortion (\emph{i.e.}, distance) between $f$ and $R(f,\vec v)$ is bounded by:
  \begin{equation}
  \lvert f(\vec d)-R(f,\vec v)(\vec d) \rvert \leqslant
  \max\limits_{0\leqslant c \leqslant 1} (1-\underline w) \lVert \nabla f(B
\cdot \vec d) \rVert \lVert \vec d \rVert \enspace ,
\end{equation}
where $B=c I + (1-c) T(\vec v)$, $\underline w = \min_i w_i$ is the minimum of $\vec w$ (the diagonal of $T(\vec v)$), and $\nabla f$ is the gradient of the function $f$.
\begin{proof}
See Appendix.
\end{proof}  
\end{theorem}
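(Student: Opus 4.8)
The plan is to reduce the statement to a one-dimensional mean value theorem applied along the line segment joining $T(\vec v)\cdot\vec d$ to $\vec d$, and then to control the resulting derivative by Cauchy--Schwarz. Concretely, for $c\in[0,1]$ I would set $B_c = cI + (1-c)T(\vec v)$ and define the scalar function $g(c) = f(B_c\cdot\vec d)$. Since $R(f,\vec v)(\vec d)=f(T(\vec v)\cdot\vec d)$, we have $g(1)=f(I\cdot\vec d)=f(\vec d)$ and $g(0)=f(T(\vec v)\cdot\vec d)=R(f,\vec v)(\vec d)$, so the quantity to be bounded is exactly $\lvert g(1)-g(0)\rvert$.

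First I would check that $g$ is well defined and differentiable on $[0,1]$. Because $T(\vec v)=\diag(w_1,\dots,w_n)$ with each $w_i\in[0,1]$, the matrix $B_c$ is diagonal with entries $c+(1-c)w_i$, each lying in $[w_i,1]\subseteq[0,1]$; hence $B_c$ is itself a shrinkage matrix. As $D$ is semi-balanced, this guarantees $B_c\cdot\vec d\in D$ for every $c\in[0,1]$, so $g$ is the composition of the differentiable function $f$ with an affine map whose image stays in $D$, and is therefore differentiable.

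Next, by the mean value theorem there exists $c^{\ast}\in[0,1]$ with $g(1)-g(0)=g'(c^{\ast})$. The chain rule gives $g'(c)=\nabla f(B_c\cdot\vec d)^{\top}\,\tfrac{d}{dc}(B_c\cdot\vec d)$, and since $B_c\cdot\vec d = T(\vec v)\cdot\vec d + c\,(I-T(\vec v))\cdot\vec d$, the derivative of the argument is simply $(I-T(\vec v))\cdot\vec d$, independent of $c$. Applying Cauchy--Schwarz then yields
\[
\lvert g'(c)\rvert \leqslant \lVert\nabla f(B_c\cdot\vec d)\rVert\,\lVert(I-T(\vec v))\cdot\vec d\rVert .
\]
It remains to bound the second factor. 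Since $I-T(\vec v)=\diag(1-w_1,\dots,1-w_n)$ and $1-w_i\leqslant 1-\underline w$ for every $i$, we obtain $\lVert(I-T(\vec v))\cdot\vec d\rVert^{2}=\sum_i(1-w_i)^2 d_i^2\leqslant (1-\underline w)^2\lVert\vec d\rVert^{2}$, hence $\lVert(I-T(\vec v))\cdot\vec d\rVert\leqslant(1-\underline w)\lVert\vec d\rVert$.

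Combining these inequalities at the point $c=c^{\ast}$ and then bounding by the maximum over $c\in[0,1]$ (legitimate because $c^{\ast}$ is an a priori unknown point of $[0,1]$) gives the claimed bound, with $B=B_{c^{\ast}}$ absorbed into $\max_{0\leqslant c\leqslant 1}$. The one genuine technical obligation—and the reason the theorem is phrased as a maximum over $c$ with the matrix $B$ rather than an evaluation at a single explicit point—is the combination of the existential nature of the mean value theorem with the domain-containment argument of the second paragraph: everything hinges on the line segment $\{B_c\cdot\vec d : c\in[0,1]\}$ remaining inside $D$, which is precisely where the hypothesis that $D$ is semi-balanced is used. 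The remaining manipulations (chain rule, Cauchy--Schwarz, and the diagonal norm estimate) are routine.
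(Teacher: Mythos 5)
Your proposal is correct and follows essentially the same route as the paper's proof: the mean value theorem along the segment from $T(\vec v)\cdot\vec d$ to $\vec d$, Cauchy--Schwarz, and the diagonal estimate $\lVert (I-T(\vec v))\cdot\vec d\rVert \leqslant (1-\underline w)\lVert\vec d\rVert$. The only difference is presentational --- you parametrize the segment explicitly and invoke the one-dimensional mean value theorem with the chain rule, and you spell out the domain-containment step ($B_c$ is a shrinkage matrix, so $B_c\cdot\vec d\in D$ by semi-balancedness), which the paper uses implicitly when citing the multivariate mean value theorem.
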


This bound is particularly useful in situations in which the norm of the gradient of the function $f$ is bounded from above by a constant.
However, even if the norm of the gradient is not bounded by a constant, the bound can still be useful. For instance, in the case of the scalar product function, the bound on the distortion will be $(1-\underline w)\lVert \vec d\rVert^2$, due to the fact that the gradient of the scalar product function is equal to $\lVert B \cdot \vec d \rVert \leqslant \lVert \vec d \rVert$ (since $B$ is a shrinkage matrix).
One restriction on the application of this bound is that the function $f$ to be protected should have a finite global sensitivity, and therefore the scalar product function mentioned has to restrict its domain to be finite, thus preventing the distortion bound from being infinite.

\section{HDP in Practice}
\label{sec:hdp-practice}

To assess the practicality of our approach, we have applied the HDP mechanism on a collaborative social system \cite{DBLP:conf/middleware/BertierFGKL10}, and evaluated its impact on a related semantic clustering task. 
In this collaborative social system, each user (\emph{i.e.} node) is associated with a profile. A profile is the set of items the user has liked or tagged (\emph{e.g.}, the set of URLs in his Delicious account). The objective of the semantic clustering task is to assign each node with the $k$-closest neighbors according to a given similarity metric. In this paper, we use the classical \emph{cosine similarity} (introduced later) to quantify the similarity between two profiles. The task is carried out using a fully distributed protocol, therefore the nodes compute locally (\emph{i.e.}, without relying on a central authority) their similarity with other profiles.

\subsection{Applying HDP to Semantic Clustering}

In the context of distributed semantic clustering, we are interested in providing heterogeneous differential privacy guarantees to the profiles of nodes (\emph{i.e.}, users). More precisely, we consider the scenario in which a particular user can assign a privacy weight, between $0$ and $1$, to each item of his profile. The value $0$ corresponds to the strongest privacy guarantee in the sense that the presence (or absence) of this item will not affect the outcome (the clustering) at all, while the value $1$ is the \emph{lowest} level of privacy possible in our framework (however it still provides the standard guarantees of $\varepsilon$-differential privacy). Thus, the privacy weights of a user directly reflect his privacy attitudes with respect to particular items of his profile, and as a side effect determines the influence of this item in the clustering process. In particular, an item with a higher weight will contribute more to the clustering process, while a item with a lower weight will influence less the resulting clustering.

The \emph{cosine similarity} between two profiles $X$ and $Y$ is defined as 
\begin{equation}
\mathsf{cos\_sim}(X,Y) = \frac{ \lvert X \cap Y \rvert }{\sqrt{ \lvert X \rvert \times \lvert Y \rvert }} \enspace , 
\end{equation}
such that $\lvert X \cap Y \rvert$ is the number of items in common between $X$ and $Y$, and $\lvert X \rvert$ and $\lvert Y \rvert$ correspond to the number of items of $X$ and $Y$, respectively.

The \emph{indicator function} of a profile, when it is represented as a binary vector, for the item $i$ is $1$ if the $i^{th}$ item is present in the profile and $0$ otherwise. More formally, the $i^{th}$ coordinate $\chi_i(x)$ of the indicator function $\chi(x)$ of the profile $x$ is denoted by:
\begin{equation}
\chi_i(x) = \left\{ 
    \begin{array}{ll}
      1    & \quad\text{if $i\in x$} \\
      0    & \quad\text{otherwise}
    \end{array}
  \right.
\enspace .
\end{equation}
Using the notation of the indicator function, the cosine similarity could be defined as
\begin{equation}
 \frac{ \chi(X)\cdot \chi(Y) }{\lVert \chi(X) \rVert_2 \lVert \chi(Y)
   \rVert_2} \enspace , 
\end{equation}
in which the operation ``$\cdot$'' denotes the scalar product. In the following, we apply HDP to the scalar product function and use this modified version to compute the 
cosine similarity on profiles represented as binary vectors.

Given two profiles $X$ and $Y$ and their corresponding indicator functions $\vec x =\chi(X)$ and $\vec y=\chi(Y)$, let $\mathsf{SP}(\vec x,\vec y)=\sum_i x_i y_i$ refers to the scalar product between the two profiles. The privacy vector $\vec v$ is composed of two parts, one for the profile $\vec x$ and the other for the profile $\vec y$: $(\vec v^{\vec x},\vec v^{\vec y})$. Consider the matrix $T(\vec v)=\diag(v)$ and let $R(\mathsf{SP},\vec v)=\mathsf{SP}(T(\vec v^{\vec x})\cdot \vec x,T(\vec v^{\vec y})\cdot \vec y)$ be the Stretching Mechanism, in which $T$ is the stretch specifier. This mechanism $R$ satisfies the premise of Theorem~\ref{thm:hdp} and therefore the choice of $T(\vec v)=\diag(\vec v)$ also ensures HDP, as proven in the following lemma.

\newcommand{\vx}{\vec v^{\vec x}}
\newcommand{\vy}{\vec v^{\vec y}}

\begin{lemma}
\label{lem:diag_v_premis_thm_hdp}
Consider a matrix $T(\vec v)=\diag(\vec v)$ and a mechanism $R(\mathsf{SP},\vec v)=\mathsf{SP}(T(\vec v^{\vec x})\cdot \vec x,T(\vec v^{\vec y})\cdot \vec y)$, such that $\vec x$ and $\vec y$ correspond to profiles and $v^{\vec x}$ and $v^{\vec y}$ to their associated privacy vectors. In this situation, the following statement is always true: $S_i(R(\mathsf{SP},\vec v)) \leqslant v_i S(\mathsf{SP})$ for all $i$.
 \begin{proof}
 See Appendix.
 \end{proof}
\end{lemma}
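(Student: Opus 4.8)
The plan is to exploit the fact that the scalar product stays \emph{multilinear} after stretching, so that it remains linear in each individual coordinate and each modular global sensitivity can be read off directly. First I would fix the bookkeeping. The two profiles are concatenated into a single input $(\vec x,\vec y)$ of $2n$ coordinates lying in the semi-balanced set $[0,1]^n\times[0,1]^n$ (which contains the binary profiles and is closed under the shrinkage matrix $T(\vec v)=\diag(\vec v)$), and the associated privacy vector is the concatenation $\vec v=(\vx,\vy)$, so that for an index $i$ in the $\vec x$-block one has $v_i=v_i^{\vec x}$ and for an index in the $\vec y$-block $v_i=v_i^{\vec y}$. Expanding the definition gives
\begin{equation}
R(\mathsf{SP},\vec v)(\vec x,\vec y)=\mathsf{SP}\bigl(\diag(\vx)\cdot\vec x,\diag(\vy)\cdot\vec y\bigr)=\sum_{j} v_j^{\vec x} v_j^{\vec y}\, x_j y_j \enspace .
\end{equation}
I would also record the baseline sensitivity: since every coordinate ranges over $[0,1]$, perturbing a single coordinate changes $\mathsf{SP}$ by at most a unit swing of $x_i$ against $y_i=1$ (or symmetrically), giving $S(\mathsf{SP})=1$. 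This is also where the domain must be kept bounded, as noted after Theorem~\ref{thm:distortion}, to ensure that $S(\mathsf{SP})$ is finite.

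Next I would compute $S_i(R)$ directly. Take $i$ in the $\vec x$-block, let $\vec d=(\vec x,\vec y)$, and let $\vec d^{(i)}$ differ from $\vec d$ only in the $i$-th coordinate, whose value becomes $x_i'$. Because every summand of $R$ except the $j=i$ term is independent of $x_i$, the difference telescopes to
\begin{equation}
R(\vec d)-R(\vec d^{(i)})=v_i^{\vec x} v_i^{\vec y}\, y_i\,(x_i-x_i') \enspace .
\end{equation}
The modular global sensitivity is the maximum absolute value of this quantity over all admissible neighboring configurations, that is, over the swing $|x_i-x_i'|\leqslant 1$ and over the free partner coordinate $y_i\in[0,1]$. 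Both are maximized at $|x_i-x_i'|=1$ and $y_i=1$, so $S_i(R)=v_i^{\vec x} v_i^{\vec y}$.

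Finally I would bound and conclude. Since $v_i^{\vec y}\leqslant 1$, we obtain $S_i(R)=v_i^{\vec x} v_i^{\vec y}\leqslant v_i^{\vec x}=v_i\,S(\mathsf{SP})$, which is exactly the desired inequality for an $\vec x$-coordinate. The $\vec y$-block is handled identically by symmetry, swapping the roles of $\vec x$ and $\vec y$: there $S_i(R)=v_i^{\vec x} v_i^{\vec y}\leqslant v_i^{\vec y}=v_i\,S(\mathsf{SP})$. Hence $S_i(R(\mathsf{SP},\vec v))\leqslant v_i S(\mathsf{SP})$ for every $i$, which is precisely the premise of Theorem~\ref{thm:hdp}.

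The one delicate point, and the step I would be most careful about, is the definition of the modular global sensitivity as a maximum taken over \emph{all} neighboring pairs differing in coordinate $i$: this is what lets the partner coordinate $y_i$ (resp.\ $x_i$) be set to its extreme value $1$. It is exactly this extra factor $v_i^{\vec y}\leqslant 1$ (rather than a bare $1$) that makes the stretched bound $v_i\,S(\mathsf{SP})$ hold, in fact with room to spare whenever the partner weight is strictly below $1$.
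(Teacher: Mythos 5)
Your proof is correct and follows essentially the same route as the paper's: both isolate the single summand of the stretched scalar product that depends on the perturbed coordinate, maximize the partner coordinate and the swing to $1$, obtain $S_i(R)=v_i^{\vec x}v_i^{\vec y}$, and conclude via $S(\mathsf{SP})=1$ and the bound $v_i^{\vec x}v_i^{\vec y}\leqslant v_i$ (the paper just writes this in matrix form, WLOG placing the changed item in the $\vec y$-block, where your telescoping expansion places it in the $\vec x$-block). Your explicit remark that the domain should be taken as $[0,1]^n\times[0,1]^n$ (semi-balanced and containing the binary profiles) is a small point of added care over the paper's phrasing, but not a different argument.
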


The previous lemma proves that the proposed modified version of scalar product is differentially private, while the next lemma simply states that if we rely on this differentially private version of scalar product to compute the cosine similarity (or any similar metric), the outcome of this computation will still be differentially private. A standard (\emph{i.e.}, non-heterogeneous) version of the following post-processing lemma can be found in the literature \cite{Kasiviswanathan2008}, which we have generalized to heterogeneous differential privacy.

\begin{lemma}[Effect of post-processing on HDP]
\label{lem:post-processing}
If a randomized function $\hat f$ satisfies $(\varepsilon,\vec v)$-differential privacy, then for any randomized function $g:\mathsf{Range}(\hat f)\to \mathbb R$ independent of the input, the composed function $g\circ \hat f$ satisfies also $(\varepsilon,\vec v)$-differential privacy. The randomness of the function $g$ is assumed to be independent of the randomness of $\hat f$ in order for this property to hold.
\begin{proof}
See Appendix.
\end{proof}
\end{lemma}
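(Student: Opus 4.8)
The plan is to reduce this to the standard post-processing (data-processing) argument, but carried out coordinate by coordinate so that the item-dependent factor $\exp(\varepsilon v_i)$ is preserved rather than collapsed to the worst-case constant. First I would fix an arbitrary item $i$, a pair of neighboring profiles $\vec d \sim \vec d^{(i)}$, and a target output $t \in \mathbb R$. The goal is then to establish $\Pr[(g\circ\hat f)(\vec d)=t] \leqslant \exp(\varepsilon v_i)\,\Pr[(g\circ\hat f)(\vec d^{(i)})=t]$, which is exactly the defining condition of $(\varepsilon,\vec v)$-differential privacy applied to $g\circ\hat f$.

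The central idea is to marginalize over the intermediate value produced by $\hat f$. Because the randomness of $g$ is assumed independent of both the input and the randomness of $\hat f$, the conditional distribution of $g$ applied to a fixed value $s$ does not depend on which profile was fed to $\hat f$. Writing $\mathsf{Range}(\hat f)$ for the set of intermediate outcomes, I would express
\begin{equation}
\Pr[(g\circ\hat f)(\vec d)=t]=\int_{\mathsf{Range}(\hat f)} \Pr[g(s)=t]\,\Pr[\hat f(\vec d)=s]\,ds \enspace ,
\end{equation}
with the analogous expression for $\vec d^{(i)}$ (a sum replacing the integral in the discrete case).

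Next I would invoke the hypothesis that $\hat f$ is $(\varepsilon,\vec v)$-differentially private, which supplies the pointwise bound $\Pr[\hat f(\vec d)=s]\leqslant \exp(\varepsilon v_i)\,\Pr[\hat f(\vec d^{(i)})=s]$ for every fixed $s$ and for the index $i$ fixed at the outset. Since $\Pr[g(s)=t]\geqslant 0$, multiplying through by this non-negative weight preserves the inequality termwise, and the crucial observation is that the factor $\exp(\varepsilon v_i)$ is a constant independent of $s$, so it can be pulled outside the integral. Recombining the remaining integral into $\Pr[(g\circ\hat f)(\vec d^{(i)})=t]$ then yields the desired bound; since $i$, the neighbors, and $t$ were arbitrary, $g\circ\hat f$ satisfies $(\varepsilon,\vec v)$-differential privacy.

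The main obstacle is technical rather than conceptual: making the marginalization rigorous when $\mathsf{Range}(\hat f)$ is continuous, which requires working with densities (or a common dominating measure) and justifying that the joint law of $g\circ\hat f$ factors as claimed, relying on the stated independence between the randomness of $g$ and that of $\hat f$. One must also verify carefully that it is precisely the per-coordinate weight $v_i$ that survives: because the same constant $\exp(\varepsilon v_i)$ bounds every intermediate term \emph{uniformly} in $s$, it factors cleanly out of the integral, which is exactly why the heterogeneous guarantee---and not merely the coarser $\overline\varepsilon$-bound of the Remark---is inherited by the post-processed function.
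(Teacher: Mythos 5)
Your proposal is correct and takes essentially the same route as the paper's own proof: both arguments marginalize over the intermediate outcome $s\in\mathsf{Range}(\hat f)$, write $\Pr[(g\circ\hat f)(\vec d)=t]$ as an integral of $\Pr[\hat f(\vec d)=s]\cdot\Pr[g(s)=t]$ (using the independence of $g$'s randomness to justify the factorization), apply the pointwise $(\varepsilon,\vec v)$-DP bound on $\hat f$, and pull the constant $\exp(\varepsilon v_i)$ outside the integral before recombining. Your closing remarks on measure-theoretic rigor and on why the per-coordinate factor $v_i$ survives go slightly beyond the paper's informal presentation, but the substance of the argument is identical.
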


\subsection{Experimental Evaluation}

For the experiments, we assume that in real life, nodes will assign different privacy weights to the items in their profiles. In order to simulate this, we generated privacy weights uniformly at random from a set of $n$ equally-spaced values in a fixed range $[\underline u,\overline u]$. More formally, each item is associated with a privacy weight sampled uniformly at random from the set $\{\underline u, \overline u + \delta, \dots, \overline u - \delta, \overline u \}$, $\delta = (\overline u-\underline u)/(n-1)$, for $0\leqslant \underline u < \overline u \leqslant 1$. For instance, if $\underline u=0.5$, $\overline u=1$ and $n=3$, then the weights assigned to items will be uniformly chosen from the set $\{0.5,0.75,1\}$.

We run our experiments on three datasets coming respectively from Delicious, Digg and a survey conducted within our lab. About 120 users participated in the survey and submitted their feedback (in forms of like/dislike) on approximately 200 news. Therefore, in the survey dataset a user's profile consists of the news he has liked, while for the Digg dataset a profile consists of the items that a user has forwarded to others users. Finally, in the Delicious dataset, the profile of the user consists of the items he has tagged.  
\begin{itemize}
\item \emph{Delicious dataset.} Delicious (\url{delicious.com}) is a collaborative platform for keeping bookmarks in which users can tag the URLs of websites they liked. The Delicious dataset consists in the profiles of approximately $500$ users, a profile being a set of URLs that the user has tagged. The total number of URLs in the collective set of users' profiles is over 50,000 URLs. In such a setting, the problem of similarity computation arises naturally, when providing personalized services such as the recommendation of URLs drawn from the ones tagged in Delicious. For the sake of simplicity, in the experiments conducted, each URL was assigned a unique identifier in the range of $\{1,\dots,50000\}$, in order to handle identifiers as integers instead of URL strings. The average size of a profile is $135$ URLs.
\item \emph{Digg dataset.} The dataset consists of $500$ users of Digg (\url{digg.com}), a social news website. The profile of these users is composed of the news that they have shared over a period of $3$ weeks in 2010. All the users considered have shared more than $7$ items per week and the dataset contains $1250$ items, each of which has been shared by at least $10$ users. The average size of a profile is $317$ items.
\item \emph{Survey dataset.} Around $200$ randomly chosen news on various topics have been shown to $120$ colleagues and relatives, who have then submitted their opinion in terms of like/dislike for each news. The average size of the profile is 68. Indeed, while each user has answered to all the 200 pieces of news, he has only liked $68$ of those pieces of news on average.
\end{itemize}

The distributed clustering algorithm is gossip-based and works in an iterative manner \cite{DBLP:conf/middleware/BertierFGKL10}.  In order to assess the quality (\emph{i.e.}, utility) of a particular clustering, we rely on the \emph{recall} metric. The recall can be defined as the ratio between the number of search items a node could find in the profiles of his $k$ closest neighbors (as induced by the clustering) over all possible items of his profile. We consider this metric for our experiments but other standard metrics used in recommendation systems could work as well. In the experiments conducted, the profile of each user is split at random into a training set composed of $90$\% of the profile while the remaining $10$\% is used for testing. After $20$ rounds of exchanging gossip messages during the clustering protocol, each user searches for those $10$\% of items in the profiles of the $k$ closest neighbors provided by the clustering protocol. The recall is then equal to the ratio of items found in the profiles of the neighbors over all the possible items contained in the testing set. The average recall of all users is then reported as the outcome of the experiment.

In Figure~\ref{fig:het_priv_one_group}, we have plotted the three cases for which the interval $(\underline u, \overline u)$ is set to be $(0,1), (0.5,1),$ and $(0.9,1)$. The $x$-axis represents $\underline u$, while the $y$-axis is the recall averaged over all slices (from $n=1$ to $n=10$) for the experiment in the range $[x,1]$. Afterwards in Figure~\ref{fig:het_priv_steps}, we have fixed the range $\underline u\in\{0,0.5,0.9\}$ and $\overline u=1$ and plot the average recall over all users over all runs versus $n$, the number of slices (ranging from $1$ to $10$). In both figures, the error bars represent the variance.

\begin{figure}
  \centering
  \begin{tikzpicture} 
    \begin{groupplot}[ %
      group style={ group name=my plots0, group size=3 by 1, x descriptions at=edge bottom, horizontal sep=2cm, vertical sep=0.1cm, },%
      footnotesize, width=4.5cm, height=4cm, xlabel=$\underline{u}$, ylabel=Recall, xmin=0,xmax=1, xtick={0,0.5,.9},xmajorgrids=true]%
      \nextgroupplot 
        \addplot[mark=o,error bars/.cd,y dir=both, y explicit] plot
        coordinates { (0, 0.962105) +- (0,{0.00675465}) (0.5, 0.957416) +- (0,{0.00970656}) (0.9, 0.968561) +- (0,{0.00829513}) };%
        \addplot[solid] plot coordinates { (0,0.994324) (0.5,0.994324) (0.9,0.994324) };
        \addplot[mark=square] plot coordinates { (0,0.965779) (0.5,0.965779) (0.9,0.965779) };
      \nextgroupplot 
        \addplot[mark=o,error bars/.cd,y dir=both, y explicit] plot
        coordinates { (0, 0.924312) +- (0,{0.00353454}) (0.5, 0.928518) +- (0,{0.00368793}) (0.9, 0.950578) +- (0,{0.00270331}) }; %
        \addplot[solid] plot coordinates { (0, 0.984533) (0.5, 0.984533) (0.9, 0.984533) };
        \addplot[mark=square] plot coordinates { (0, 0.924899) (0.5, 0.924899) (0.9, 0.924899) }; 
      \nextgroupplot[legend entries={HDP;,Baseline;,Totally random},legend to name=groups_legend0] 
        \addplot[mark=o,error bars/.cd,y dir=both, y explicit] plot
        coordinates { (0, 0.122623) +- (0, {0.00735435}) (0.5, 0.211224) +- (0, {0.00947697}) (0.9, 0.224832) +- (0, {0.00835777}) };%
        \addplot[solid] plot coordinates { (0,0.261) (0.5,0.261) (0.9,0.261) }; 
        \addplot[mark=square]         plot coordinates { (0,0.08)  (0.5,0.08)  (0.9,0.08) }; 
    \end{groupplot} 
    \draw (my plots0 c2r1.south)+(0pt,-50pt) node {\ref{groups_legend0}};
    \draw (my plots0 c1r1.north)+(0pt,+10pt) node {Survey};
    \draw (my plots0 c2r1.north)+(0pt,+10pt) node {Digg};
    \draw (my plots0 c3r1.north)+(0pt,+10pt) node {Delicious};
  \end{tikzpicture}
  
  \caption{The value reported is the average recall obtained when all peers have the same distribution over privacy weights for all items, averaged over the number of slices. \emph{Baseline} refers to the recall obtained when the system run with no privacy guarantees using the plain version of the clustering algorithm, while \emph{Random} refers to a random clustering process in which peers choose their neighbors totally at random.}
  \label{fig:het_priv_one_group}
\end{figure}
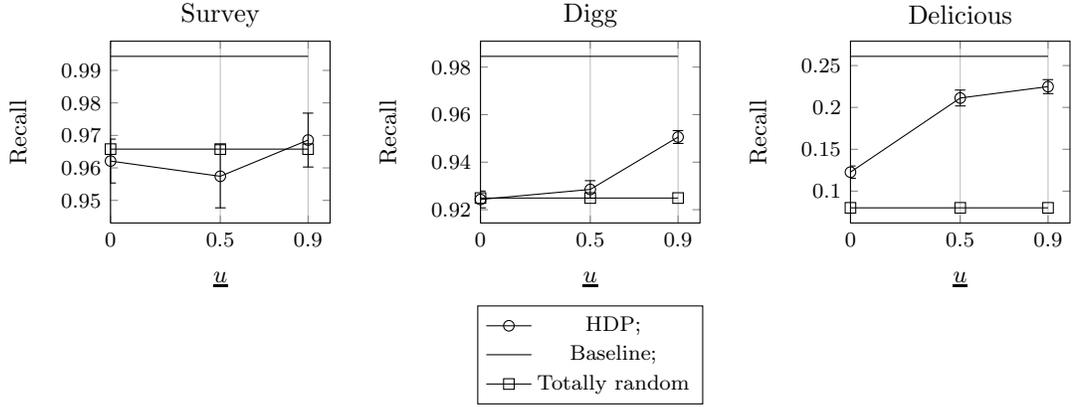

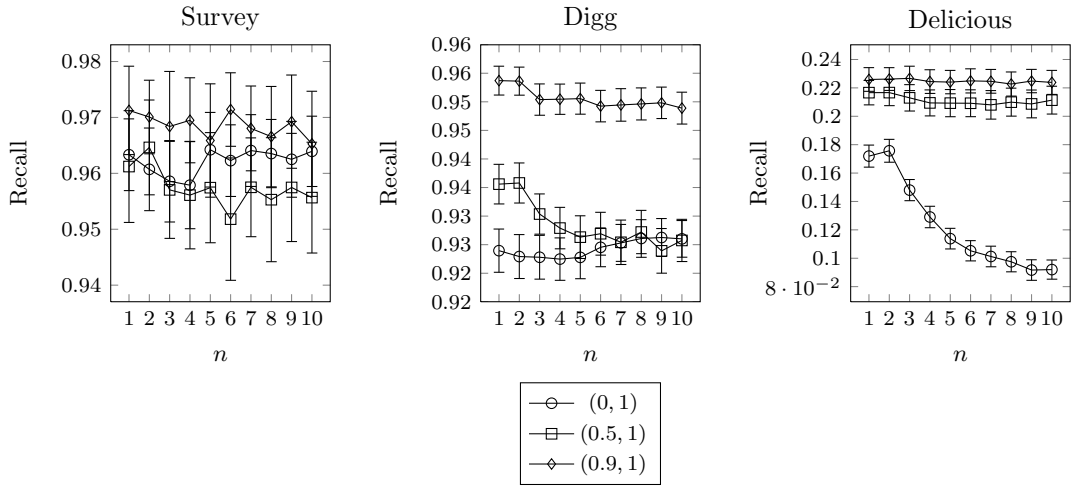
\begin{figure}
  \centering
  \begin{tikzpicture} 
    \begin{groupplot}[ %
      group style={ group name=my plots0, group size=3 by 1, x descriptions at=edge bottom, horizontal sep=2cm, vertical sep=0.1cm, },%
      footnotesize, width=4.5cm, height=5cm, xlabel=$n$, ylabel=Recall, xtick={1,2,3,4,5,6,7,8,9,10}]%
      \nextgroupplot 
        \addplot[mark=o,error bars/.cd,y dir=both, y explicit] plot coordinates {(1,0.963327) +- (0,{0.00641545}) (2,0.960714) +- (0,{0.00739063}) (3,0.958579) +- (0,{0.00728208}) (4,0.957882) +- (0,{0.00778697}) (5,0.964223) +- (0,{0.00667375}) (6,0.962271) +- (0,{0.00640652}) (7,0.964075) +- (0,{0.00641996}) (8,0.963536) +- (0,{0.00610116}) (9,0.962509) +- (0,{0.00678185}) (10,0.963934) +- (0,{0.00628815})};%
        \addplot[mark=square,error bars/.cd,y dir=both, y explicit] plot coordinates {(1,0.961213) +- (0,{0.00999496}) (2,0.964636) +- (0,{0.0084781}) (3,0.957042) +- (0,{0.00867417}) (4,0.956106) +- (0,{0.00961074}) (5,0.957434) +- (0,{0.00985024}) (6,0.951809) +- (0,{0.0109436}) (7,0.957504) +- (0,{0.00883809}) (8,0.955286) +- (0,{0.0110912}) (9,0.957474) +- (0,{0.0096757}) (10,0.955656) +- (0,{0.00990883})};%
        \addplot[mark=diamond,error bars/.cd,y dir=both, y explicit] plot coordinates {(1,0.971259) +- (0,{0.00790364}) (2,0.970075) +- (0,{0.00658133}) (3,0.968379) +- (0,{0.00984721}) (4,0.969489) +- (0,{0.00760315}) (5,0.965872) +- (0,{0.0101376}) (6,0.971411) +- (0,{0.00657448}) (7,0.968037) +- (0,{0.00758275}) (8,0.966562) +- (0,{0.00897953}) (9,0.969245) +- (0,{0.00833922}) (10,0.96528) +- (0,{0.00940239})};%
      \nextgroupplot 
        \addplot[mark=o,error bars/.cd,y dir=both, y explicit] plot coordinates {(1,0.923952) +- (0,{0.00378265}) (2,0.922922) +- (0,{0.00384395}) (3,0.922799) +- (0,{0.00383054}) (4,0.922475) +- (0,{0.00371176}) (5,0.922777) +- (0,{0.00373456}) (6,0.924535) +- (0,{0.00339307}) (7,0.92531) +- (0,{0.00323192}) (8,0.926083) +- (0,{0.00329191}) (9,0.926254) +- (0,{0.0033173}) (10,0.926016) +- (0,{0.00320774})};%
        \addplot[mark=square,error bars/.cd,y dir=both, y explicit] plot coordinates {(1,0.935594) +- (0,{0.00346598}) (2,0.935795) +- (0,{0.00350649}) (3,0.93036) +- (0,{0.00352085}) (4,0.927904) +- (0,{0.0036266}) (5,0.926319) +- (0,{0.00373645}) (6,0.926917) +- (0,{0.0037479}) (7,0.925417) +- (0,{0.00388754}) (8,0.927215) +- (0,{0.0037878}) (9,0.923918) +- (0,{0.00390597}) (10,0.925742) +- (0,{0.00369374})};%
        \addplot[mark=diamond,error bars/.cd,y dir=both, y explicit] plot coordinates {(1,0.953708) +- (0,{0.00252373}) (2,0.953617) +- (0,{0.00245418}) (3,0.95039) +- (0,{0.00274023}) (4,0.950455) +- (0,{0.00264876}) (5,0.950557) +- (0,{0.00272156}) (6,0.949247) +- (0,{0.00275547}) (7,0.949455) +- (0,{0.00284501}) (8,0.94963) +- (0,{0.0028042}) (9,0.949823) +- (0,{0.00275119}) (10,0.948896) +- (0,{0.00278876})};
      \nextgroupplot[legend entries={{$(0,1)$},{$(0.5,1)$},{$(0.9,1)$}},legend to name=groups_legend1] 
        \addplot[mark=o,error bars/.cd,y dir=both, y explicit] plot coordinates {(1,0.171944) +- (0,{0.00776592}) (2,0.175693) +- (0,{0.00805463}) (3,0.147996) +- (0,{0.00745144}) (4,0.129056) +- (0,{0.00755197}) (5,0.113812) +- (0,{0.00726142}) (6,0.105265) +- (0,{0.0071333}) (7,0.101245) +- (0,{0.00725309}) (8,0.0975165) +- (0,{0.00707782}) (9,0.0916932) +- (0,{0.00723702}) (10,0.0920128) +- (0,{0.00675691}) };%
        \addplot[mark=square,error bars/.cd,y dir=both, y explicit] plot coordinates { (1,0.216644) +- (0,{0.00860441}) (2,0.216679) +- (0,{0.00919461}) (3,0.212992) +- (0,{0.00928847}) (4,0.20942) +- (0,{0.00915695}) (5,0.20926) +- (0,{0.00956893}) (6,0.209179) +- (0,{0.00946109}) (7,0.20805) +- (0,{0.0100509}) (8,0.209957) +- (0,{0.00985011}) (9,0.208697) +- (0,{0.00979965}) (10,0.211365) +- (0,{0.00979453})};%
        \addplot[mark=diamond,error bars/.cd,y dir=both, y explicit] plot coordinates { (1,0.22592) +- (0,{0.00834694}) (2,0.226232) +- (0,{0.008016}) (3,0.226689) +- (0,{0.00858597}) (4,0.22444) +- (0,{0.00834212}) (5,0.224082) +- (0,{0.00823647}) (6,0.224933) +- (0,{0.0085006}) (7,0.224629) +- (0,{0.0083697}) (8,0.222718) +- (0,{0.00849421}) (9,0.224809) +- (0,{0.00822758}) (10,0.223863) +- (0,{0.00845812}) }; %
    \end{groupplot} 
    \draw (my plots0 c2r1.south)+(0pt,-50pt) node {\ref{groups_legend1}};
    \draw (my plots0 c1r1.north)+(0pt,+10pt) node {Survey};
    \draw (my plots0 c2r1.north)+(0pt,+10pt) node {Digg};
    \draw (my plots0 c3r1.north)+(0pt,+10pt) node {Delicious};
  \end{tikzpicture}
  
  \caption{The value reported is the average recall obtained when all peers have the same distribution over privacy weights for all items, plotted against the number of slices.}
  \label{fig:het_priv_steps}
\end{figure}

From Figure~\ref{fig:het_priv_one_group} (Delicious), we can observe that there is not much difference in terms of utility between the situations in which $\underline u=0.5$ and $\underline u=0.9$, as both situations are close to the utility obtained with the baseline algorithm. Indeed, the largest difference is obtained when $\underline u$ is set to $0$, in which case the utility gets closer from the utility obtained through a random clustering. Furthermore, Figure~\ref{fig:het_priv_steps} (Delicious) demonstrates that varying the number of slices has almost no effect on the utility achieved by $\underline u \in \{0.5,0.9\}$, but has significant impact on the situation in which $\underline u=0$, for which the utility decreases as the number of slices increases. One possible interpretation is that as $n$ (the number of slices) increases, there are more and more items whose privacy weight differs from $1$. In a nutshell, this seems to indicate that items with high privacy weights (above $0.5$) have an important impact on the utility. Combining this observation with the fact that when $\underline u\geqslant 0.5$, the utility was not affected show that items with low privacy weights (less than $0.5$) can harm the utility in a non-negligible manner. While this may seem strange at first glance, it actually solves an apparent contradiction when taken from a different point of view. Consider for instance that a privacy weight is changed by $0.1$ (\emph{e.g.}, from $0.9$ to $0.8$ or from $0.1$ to $0.2$). The amount of impact on utility resulting from this change depends not only on the value of the difference, but also on the original value being changed. On one hand, the utility gain resulting from modifying the privacy weight from $0.1$ to $0.2$ is less than the utility loss if the privacy weight were modified from $0.9$ to $0.8$. On the other hand, changing $\underline u$ from $0.5$ to $0$ can cause a significant damage to utility because the average privacy weight drops from $0.75$ to $0.5$.

\subsection{Varying  Privacy Attitudes Among Users}
\label{sec:heterogeneous_groups}

The results of the previous section were obtained for the setting in which all nodes draw their privacy weights from the same distribution (\emph{i.e.}, all users have the same privacy attitude). However, according to a recent survey \cite{DBLP:journals/ijmms/JensenPJ05}, users of information systems can be classified in at least three very different groups called the \emph{Westin categories} \cite{harris2003}. These three groups are: \textsc{Privacy Fundamentalists}, \textsc{Privacy Pragmatists} and \textsc{Privacy Unconcerned}. The first group is composed of the users concerned about their privacy, while on the contrary the third group is composed of the ones that are the least concerned (according to a particular definition of concern detailed in the cited poll) and finally the second group is anything in between. For the following experiments, we have adopted the spirit of this classification and consider the three groups of users defined thereafter.    

Each group is equipped with a different distribution from which they pick their privacy weights as follows.
\begin{enumerate}
\item The \textsc{Unconcerned} group corresponds to users that do not really care about their privacy and thus all their items have a privacy weight of $1$.
\item The \textsc{Pragmatists} group represent users that care a little bit about their privacy, such that all their items have a privacy weight chosen uniformly at random among $\{0.5, 0.75,1\}$.
\item The \textsc{Fundamentalists} group embodies users that really care a lot about their privacy and whose items have a privacy weight chosen uniformly at random among $\{0, 0.5,1\}$.
\end{enumerate}
The main issue we want to investigate is how the presence of a relatively conservative group (\emph{i.e.}, having relatively high privacy attitudes) affect the utility of other groups. More specifically, we want to measure whether or not the presence of a group of nodes with high privacy attitudes indirectly \emph{punish} (\emph{i.e.}, reduce the utility) of other more open groups. 

During the experimentations, we have tried different proportions of these groups for a total number of users of $500$. Each value plotted in Figure~\ref{fig:het_priv_3_groups}, has been averaged over $10$ runs but the partition in groups is fixed for a given set of runs.  All experiments are averaged on $\varepsilon \in \{ 0.1,0.5,1,2,3\}$. According to a 2004 poll \cite{DBLP:journals/ijmms/JensenPJ05}, the percentage of each of the privacy groups \textsc{Fundamentalists}, \textsc{Pragmatists} and \textsc{Unconcerned} are respectively, $34\%$,$43\%$ and $23\%$. Nonetheless, we also experiment a combination of several other distributions in order to investigate other possible settings.
In particular, we have also tried the following percentages for each group: the proportion of the \textsc{Unconcerned} group and \textsc{Pragmatists} group vary in the following range $\{ 10\%, 20\%, 60\%, 70\% \}$, while the \textsc{Fundamentalists} group is assigned to the remaining percentage (\emph{i.e.}, there is only two degrees of freedom). If \textsc{Unconcerned} group + \textsc{Pragmatists} group $> 100\%$, then this combination is discarded. In Figure~\ref{fig:het_priv_3_groups}, the $x$-axis represents the percentage of the \textsc{Fundamentalists} group, while the $y$-axis corresponds to the recall. Each of the three lines correspond to the recall of one of the three groups (\textsc{Fundamentalists}, \textsc{Pragmatists}, and \textsc{Unconcerned}). For each of the four plots, the proportion of the \textsc{Pragmatists} group is denoted in the plot by the expression \emph{Pragmatists = some value}. The proportion of the remaining group (\textsc{Unconcerned}) can be directly inferred by subtracting the proportions of the two other groups from $100\%$.
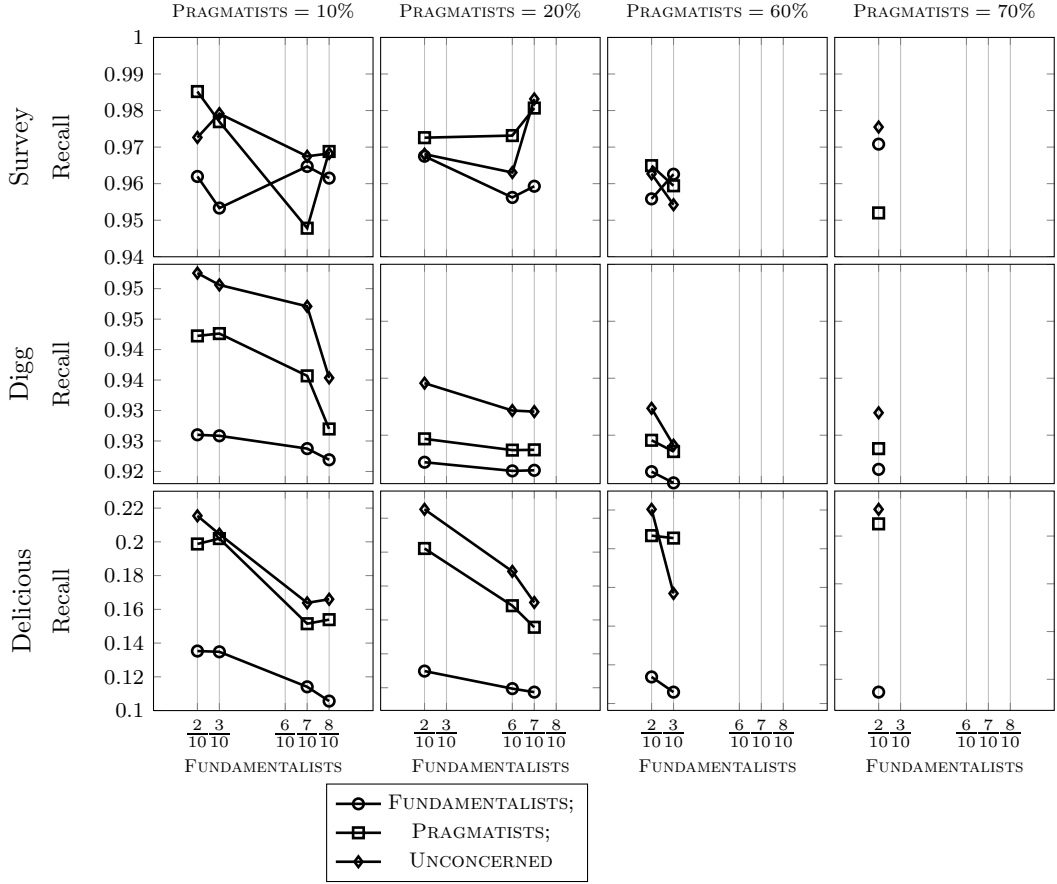
\begin{figure}
  \centering

  \begin{tikzpicture}
    \begin{groupplot}[%
      group style={ group name=my plots, group size=4 by 3, x descriptions at=edge bottom, horizontal
        sep=0.1cm, vertical sep=0.1cm, yticklabels at=edge left, y descriptions at=edge left},%
      footnotesize, width=4.5cm, height=4.5cm, xlabel={\scriptsize\textsc{Fundamentalists}}, ylabel=Recall, xmin=0,xmax=1,
xtick={0.2,0.3,0.6,0.7,0.8},xticklabels={$\frac{2}{10}$,$\frac{3}{10}$,$\frac{6}{10}$,$\frac{7}{10}$,$\frac{8}{10}$}
, xmajorgrids=true]%
      \nextgroupplot[ymax=1,ymin=0.94]
      \addplot[mark=o,line width=1pt] plot coordinates{ (0.8,0.961515) (0.7,0.964721) (0.3,0.953323) (0.2,0.961969) }; %
      \addplot[mark=square,line width=1pt] plot coordinates{ (0.8,0.968804) (0.7,0.947818) (0.3,0.976957) (0.2,0.985202) }; %
      \addplot[mark=diamond,line width=1pt] plot coordinates{ (0.8,0.968259) (0.7,0.967482) (0.3,0.979159) (0.2,0.972639) }; %
      \nextgroupplot[ymax=1,ymin=0.94]
      \addplot[mark=o,line width=1pt] plot coordinates{ (0.7,0.959294) (0.6,0.956225) (0.2,0.9675) }; %
      \addplot[mark=square,line width=1pt] plot coordinates{ (0.7,0.980684) (0.6,0.973155) (0.2,0.972579) }; %
      \addplot[mark=diamond,line width=1pt] plot coordinates{ (0.7,0.983174) (0.6,0.963049) (0.2,0.968067) }; %
      \nextgroupplot[ymax=1,ymin=0.94]
      \addplot[mark=o,line width=1pt] plot coordinates{ (0.3,0.962581) (0.2,0.955841) }; %
      \addplot[mark=square,line width=1pt] plot coordinates{ (0.3,0.959421) (0.2,0.964924) }; %
      \addplot[mark=diamond,line width=1pt] plot coordinates{ (0.3,0.954234) (0.2,0.962693) }; %
      \nextgroupplot[ymax=1,ymin=0.94]
      \addplot[mark=o,line width=1pt] plot coordinates{ (0.2,0.970811) }; %
      \addplot[mark=square,line width=1pt] plot coordinates{ (0.2,0.952) }; %
      \addplot[mark=diamond,line width=1pt] plot coordinates{ (0.2,0.975486) }; %
      \nextgroupplot[ymax=0.959,ymin=0.923]
      \addplot[mark=o,line width=1pt] plot coordinates{ (0.8,0.926913) (0.7,0.928749) (0.3,0.93085) (0.2,0.931017) }; %
      \addplot[mark=square,line width=1pt] plot coordinates{ (0.8,0.931971) (0.7,0.940688) (0.3,0.947629) (0.2,0.947245) }; %
      \addplot[mark=diamond,line width=1pt] plot coordinates{ (0.8,0.940358) (0.7,0.952098) (0.3,0.95561) (0.2,0.957555) }; %
      \nextgroupplot[ymax=1,ymin=0.923]
      \addplot[mark=o,line width=1pt] plot coordinates{ (0.7,0.927678) (0.6,0.927472) (0.2,0.930505) }; %
      \addplot[mark=square,line width=1pt] plot coordinates{ (0.7,0.934862) (0.6,0.934764) (0.2,0.938725) }; %
      \addplot[mark=diamond,line width=1pt] plot coordinates{ (0.7,0.948286) (0.6,0.948664) (0.2,0.95827) }; %
      \nextgroupplot[ymax=1,ymin=0.923]
      \addplot[mark=o,line width=1pt] plot coordinates{ (0.3,0.923259) (0.2,0.927202) }; %
      \addplot[mark=square,line width=1pt] plot coordinates{ (0.3,0.934262) (0.2,0.938215) }; %
      \addplot[mark=diamond,line width=1pt] plot coordinates{ (0.3,0.936478) (0.2,0.949434) }; %
      \nextgroupplot[ymax=1,ymin=0.923]
      \addplot[mark=o,line width=1pt] plot coordinates{ (0.2,0.928011) }; %
      \addplot[mark=square,line width=1pt] plot coordinates{ (0.2,0.935287) }; %
      \addplot[mark=diamond,line width=1pt] plot coordinates{ (0.2,0.947866) }; %
      \nextgroupplot[ymax=0.23,ymin=0.1] 
      \addplot[mark=o,line width=1pt] plot coordinates{ (0.8,0.10563) (0.7,0.114083) (0.3,0.134836) (0.2,0.135289) }; %
      \addplot[mark=square,line width=1pt] plot coordinates{ (0.8,0.153896) (0.7,0.151486) (0.3,0.201944) (0.2,0.198741) }; %
      \addplot[mark=diamond,line width=1pt] plot coordinates{ (0.8,0.16598) (0.7,0.163851) (0.3,0.204568) (0.2,0.215418) }; %
      \nextgroupplot 
      \addplot[mark=o,line width=1pt] plot coordinates{ (0.7,0.117407) (0.6,0.119469) (0.2,0.129819) }; %
      \addplot[mark=square,line width=1pt] plot coordinates{ (0.7,0.155669) (0.6,0.168389) (0.2,0.202146) }; %
      \addplot[mark=diamond,line width=1pt] plot coordinates{ (0.7,0.170302) (0.6,0.188638) (0.2,0.225248) }; %
      \nextgroupplot 
      \addplot[mark=o,line width=1pt] plot coordinates{ (0.3,0.125898) (0.2,0.133747) }; %
      \addplot[mark=square,line width=1pt] plot coordinates{ (0.3,0.205493) (0.2,0.206758) }; %
      \addplot[mark=diamond,line width=1pt] plot coordinates{ (0.3,0.17689) (0.2,0.220355) }; %
      \nextgroupplot[legend columns=1, legend entries={\textsc{Fundamentalists};,\textsc{Pragmatists};,\textsc{Unconcerned}}, legend to name=groups_legend3] 
      \addplot[mark=o,line width=1pt] plot coordinates{ (0.2,0.134637) }; %
      \addplot[mark=square,line width=1pt] plot coordinates{ (0.2,0.205108) }; %
      \addplot[mark=diamond,line width=1pt] plot coordinates{ (0.2,0.211198) }; %
    \end{groupplot}

    \draw (my plots c1r1.west)+(-50pt,0pt) node[rotate=90] {Survey};
    \draw (my plots c1r2.west)+(-50pt,0pt) node[rotate=90] {Digg};
    \draw (my plots c1r3.west)+(-50pt,0pt) node[rotate=90] {Delicious};

    \draw (my plots c1r1.north)+(0,10pt) node {\scriptsize\textsc{Pragmatists}$\;=10\%$};
    \draw (my plots c2r1.north)+(0,10pt) node {\scriptsize\textsc{Pragmatists}$\;=20\%$};
    \draw (my plots c3r1.north)+(0,10pt) node {\scriptsize\textsc{Pragmatists}$\;=60\%$};
    \draw (my plots c4r1.north)+(0,10pt) node {\scriptsize\textsc{Pragmatists}$\;=70\%$};

  \end{tikzpicture}
\ref{groups_legend3}
  \caption{Results obtained for the Delicious, Digg and survey datasets. The heterogeneous differential privacy has been computed for $3$ groups with different privacy attitudes. For a particular figure and a particular $x$ tick, the percentage of \textsc{Unconcerned} group is fully determined as $(1-$ \textsc{Pragmatists} $-$ $ x $).}
  \label{fig:het_priv_3_groups}
\end{figure}

From the results obtained, we can conclude that (1) \textsc{Pragmatists} and \textsc{Unconcerned} always have better recall than \textsc{Fundamentalists} and (2) \textsc{Unconcerned} often have a better recall than \textsc{Pragmatists}, though not always. This seems to indicate that the group caring more about privacy usually is punished more (\emph{i.e.}, its utility is low) than groups that are more liberal with respect to privacy expectations. This not really surprising as a low privacy weight will result in users from the \textsc{Fundamentalists} group segregating themselves from other users in the clustering to the point that they will not necessarily have meaningful neighbors in their view. Finally, to the question whether (or not) more liberal groups will be punished by conservative groups, the answer seems to be negative. Indeed it can be seen from the results of the experiments, that conservative groups are punished more than liberal groups. For instance, the utility of liberal groups only decreases from $0.22$ to $0.19$ as the percentage of conservative groups increases from $20$\% to $80$\%.

\section{Related Work}
\label{sec:related-work}

The majority of previous works on heterogeneous privacy has focused only on user-grained privacy \cite{DBLP:journals/ppna/DasBK11,DBLP:journals/informs/KumarGG10}, in which each user may define his own privacy level (instead of having the same privacy guarantee for all users across the system). As opposed to item-grained privacy, which allows each item of an individual user to have a different privacy weight, user-grained privacy restricts all the items of the same user to the same privacy weight.
For instance, Das, Bhaduri, and Kargupta \cite{DBLP:journals/ppna/DasBK11} have proposed a secure protocol for aggregating sums in a P2P network. In this setting, each node has an input vector, which could be for instance a profile. In a nutshell in this protocol, each node picks at random a few other nodes of the system with whom it computes some local function\footnote{The function is local in the sense that it depends only on the inputs of the node and the peers it has chosen.} in a private manner (the local function begins with a sum as well). The more peers a specific node chooses to participate to the computation, the higher the privacy will be obtained by this node according to the considered definition of privacy. More precisely in their setting, privacy is mainly quantified by the probability of collusion of the peers chosen by a particular node when the aggregation protocol is run. This probability can be made smaller by choosing a larger set of peers, the main intuition being that for a particular node running the aggregation protocol with a larger group diminishes the probability that all these nodes will collude against him. Thus, the best privacy guarantees could be obtained by running the protocol with the entire set of peers but this would be too costly in practice. The main objective is this protocol is to be adaptive by providing a trade-off between the privacy level chosen by a user and the resulting cost in terms of computation and communication. In particular, each user has the possibility to choose heterogeneously the peers with whom he wants to run the aggregation protocol by taking into account his own privacy preferences. However, this work does not seem to be easily extendable to integrate item-grained privacy.

Another work due to Kumar, Gopal, and Garfinkel \cite{DBLP:journals/informs/KumarGG10} is a form of generalization of $k$-anonymity \cite{DBLP:journals/ijufks/Sweene02}. The standard definition of $k$-anonymity requires that in the sanitized database that is released, the profile of a particular individual should be indistinguishable from at least $k-1$ other individuals (thus here $k$ can be considered as being the privacy parameter). The proposed generalization \cite{DBLP:journals/informs/KumarGG10} essentially enables each user to require a different value for $k$ for each attribute in his profile. For example, a user may require that his data should be included in the published database only if there are at least $4$ other users sharing his ZIP code and at least $8$ other users whose age difference with him is at most $3$ years. The possibility of setting the range of a particular attribute could be regarded as item-grained heterogeneous privacy in the sense that an attribute whose privacy range is large is less likely to be useful for de-anonymizing the user than less private attribute. To summarize, the main objective of this approach is to protect the privacy of a user by anonymizing it (\emph{e.g.}, to prevent de-anonymization and linking attacks), while in our work the main objective is to prevent the possibility of inferring the presence or absence of a particular item in the profile.

A line of research on auctions \emph{for} privacy has provided almost the same definition for the heterogeneous differential privacy as ours
\cite{DBLP:conf/sigecom/GhoshR11,DBLP:journals/corr/abs-1111-2885}.
The main difference with our contribution is that these previous works do not provide a mechanism to realize heterogeneous difference privacy, but instead only use the definition to achieve the post-release privacy guarantees. 
In the model studied, the participants are composed of a data analyst and a group of users. Each user has as input a private bit and the data analyst wants to estimate in a differentially-private manner a global function of the private bits of all users, such as the sum or the weighted sum. The data analyst is willing to pay each user for the loss of privacy he incurred by participating to this process. More precisely, each user $i$ has a \emph{privacy valuation} $v_i(\varepsilon_i) : \mathbb R_+ \to \mathbb R_+$ indicating the amount of his loss given the privacy guarantee he gets. The user has no control over $\varepsilon_i$ (\emph{i.e.}, the privacy guarantee he ends up with), which is decided solely by the auction mechanism. As such the valuation function $v_i$ merely affects the payment of the user, as his payment is decided indirectly by the mechanism given the valuation function and is not decided directly by him. Therefore, our work is incomparable to theirs, because the privacy parameter $\varepsilon_i$ acts mainly as an indication about the level of privacy reached, while in our setting the privacy parameter represents the \emph{user's requirement} about the privacy of a particular item of his profile. Moreover, in \cite{DBLP:conf/sigecom/GhoshR11} it is stated that users finally end up having completely homogeneous privacy guarantees. More precisely, each user ends up having $\varepsilon$-differential privacy, with some $\varepsilon$ being the same for all users. In contrast in \cite{DBLP:journals/corr/abs-1111-2885}, users effectively have heterogeneous privacy guarantees. However, these guarantees are determined by the public weights of the auctioneer, which the auctioneer chooses so as to compute the weighted average of the users' inputs and independently of the privacy valuations of the users.

Finally, Nissim, Raskhodnikova and Smith \cite{DBLP:conf/stoc/NissimRS07} have investigated how the amount of noise necessary to achieve differential privacy can be tailored by taking into account to the particular inputs (\emph{i.e.}, profiles) of participants, in addition to the sensitivity of the function considered. The main objective of this approach is to reduce the amount of noise that needs to be added to inputs that are not \emph{locally sensitive} (\emph{i.e.}, for which the output does not change much if only one item is changed). However, they also show that the amount of noise added may itself reveal information about the inputs. Hence, they defined a differentially private version formalizing the notion of local sensitivity called \emph{smooth sensitivity}, guaranteeing that the amount of noise added is itself $\varepsilon$-differentially private. Similarly, we have ensured that for our notion of heterogeneous differential privacy, the amount of noise added is not impacted by the specific profile considered or by the privacy requirements formulated by a user. Rather, we have modified the function under consideration and its sensitivity, which also impacts the distortion induced of the output (\emph{cf.}, Section~\ref{sec:distortion}). We have also proven that the privacy requirements of a user expressed in the form of \emph{private weights} remain private are they are also covered by $\varepsilon$-differentially privacy guarantees. Thus, it is difficult for an adversary observing the output of an heterogeneous differentially private mechanism to guess the privacy weight that a user has put on a particular item of his profile.

\section{Conclusion}
\label{sec:conclusion}

In this work, we have introduced the novel concept of \emph{heterogeneous differential privacy} that can accommodate for different privacy expectations not only per user but also per item as opposed to previous models that implicitly assume uniform privacy requirements. We have also described a generic mechanism achieving HDP called the \emph{Stretching Mechanism}, which protects at the same time the items of the profile of user and the privacy vector representing his privacy expectations across items of the profile. We applied this mechanism for the computation of the cosine similarity and evaluate its impact on a distributed semantic clustering task by using the recall as a measure of utility. Moreover, we have conducted an experimental evaluation of the impact of having different groups of users with different privacy requirements.

Although the Stretching Mechanism can be applied to a wealth of functions, it is nonetheless not directly applicable to some natural functions, such as the $\ell_0$ norm and $\min$. Indeed, when computing the $\ell_0$ norm (\emph{i.e.}, the number of non-zero coordinates in a given vector), each coordinate contributes either zero or one regardless of its value. Since the Stretching Mechanism modifies this value, this mechanism would always output the true exact value as long as no privacy weight has been set to exactly zero. For the case of $\min$, due to the fact that the Stretching Mechanism shrinks each coordinate by a factor corresponding to its privacy weight, the resulting output may not have anymore a relation to the intended semantics of the function $\min$. 

Another challenge is to enable users to estimate the amount of distortion in the output that they received out of an heterogenous differentially private mechanism. For instance, for functions such as the sum, recipients will not be able to estimate the correct value without being given the distortion. Although the distortion has an upper bound given by Theorem~\ref{thm:distortion}, the information needed to compute the upper bound is private. Therefore, releasing the distortion (or even its upper bound) would constitute a violation of privacy. We believe this issue could be solved partially by releasing an upper bound using the traditional Laplacian mechanism at an additional cost of an $\varepsilon$ amount of privacy.
Another important future work includes the characterization of functions that have a low and high distortion. Indeed, functions having a high distortion are not really suitable for our HDP mechanism. We also leave as open the question of designing a different mechanism than the Stretching Mechanism achieving HDP with a lower distortion.

\bibliographystyle{IEEEtran}
\bibliography{refs.bib}
 
\appendix

\section{Proofs}
 
\subsection{Proof of Lemma~\ref{lem:gsi_leq_psii_gs}}

\begin{proposition}[Monotonicity of subdomain optimization]
\label{proposition:maximization_on_smaller_domain}
Let $\theta$ and $\theta^{\prime}$ be the result of two maximization problems $p_1$ and $p_2$ of the function $g$ in which the maximization is over domains $J$ and $J^{\prime}$, respectively. Then, if $J\subseteq J^{\prime}$, this implies that $\theta \leqslant \theta^{\prime}$. The opposite statement also holds for minimization problems.
\end{proposition}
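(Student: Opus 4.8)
The plan is to exploit the set inclusion $J \subseteq J'$ directly, via the elementary observation that enlarging the feasible region of a maximization problem can only increase (or leave unchanged) the optimal value. No structural assumption on the objective $g$ is needed — neither continuity nor convexity nor boundedness — so the argument is purely set-theoretic.

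First I would fix notation, writing $\theta = \sup_{x \in J} g(x)$ and $\theta' = \sup_{x \in J'} g(x)$, and I would read the stated ``max'' as a supremum throughout. This phrasing matters because the feasible regions arising in the intended application (e.g. the semi-balanced domains, of which the example $\{\vec x = (x_1,x_2) \mid 0 < x_1, x_2 < 1\}$ is open) need not attain their optima; working with suprema keeps the argument valid whether or not a maximizer exists.

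The key step is then to show that $\theta'$ is an upper bound for the set $\{\, g(x) : x \in J \,\}$. Indeed, take any $x \in J$; since $J \subseteq J'$ we also have $x \in J'$, and therefore $g(x) \leqslant \sup_{y \in J'} g(y) = \theta'$. As this inequality holds for \emph{every} $x \in J$, the value $\theta'$ is an upper bound of $\{\, g(x) : x \in J \,\}$, and since $\theta$ is by definition the \emph{least} upper bound of that same set, we conclude $\theta \leqslant \theta'$.

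For the minimization claim I would run the mirror-image argument: writing $\theta = \inf_{x \in J} g(x)$ and $\theta' = \inf_{x \in J'} g(x)$, every $x \in J$ again lies in $J'$, so $g(x) \geqslant \theta'$, making $\theta'$ a lower bound for $\{\, g(x) : x \in J \,\}$; since $\theta$ is the greatest lower bound, we obtain $\theta \geqslant \theta'$, which is the reversed inequality as claimed. I do not expect any genuine obstacle here: the statement is essentially a one-line consequence of the monotonicity of the supremum (respectively infimum) under set inclusion, and the only point demanding a moment of care is the supremum-versus-attained-maximum distinction, which is dispatched simply by phrasing everything in terms of suprema and infima rather than realized optima.
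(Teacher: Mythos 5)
Your proof is correct and takes essentially the same route as the paper's: every feasible point of the smaller problem lies in $J^{\prime}$ and is therefore dominated by the optimum over $J^{\prime}$. If anything, your supremum phrasing is slightly more careful than the paper's own proof, which conflates the maximizer $\theta$ with the optimal value $g(\theta)$ and implicitly assumes the maxima are attained, but the underlying argument is identical.
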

\begin{proof}
  Since $\theta^{\prime}$ is the optimal result of $p_2$ over $J^{\prime}$, this means that by definition:
  \begin{equation}
    \label{eq:max_j}
    g(\theta^{\prime}) \geqslant g(j) \mbox{ , for all } j \mbox{ in }J^{\prime}.
  \end{equation}  
Moreover, since any result $\theta$ for $p_1$ will always be in $J$, and therefore in $J^{\prime}$, then $g(\theta^{\prime}) \geqslant g(\theta)$ by \eqref{eq:max_j}. (The proof that the opposite statement holds for minimization problems follows from the same arguments and thus we choose to omit it.) %
\end{proof}

\begin{lemma}[Shrinkage matrices composition]
  \label{lem:two_shrinking_matrices}
  If $A$ and $B$ are two shrinkage matrices and $D$ a semi-balanced set, then $ABD \subseteq BD \subseteq D$.
\end{lemma}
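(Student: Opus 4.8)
The plan is to establish the two set inclusions $ABD \subseteq BD$ and $BD \subseteq D$ separately, where throughout $MD$ denotes the image $\{M\vec x : \vec x \in D\}$ of $D$ under a matrix $M$. Recall that a shrinkage matrix is diagonal with entries in $[0,1]$, and that $D$ being semi-balanced means precisely that $M\vec x \in D$ for every shrinkage matrix $M$ and every $\vec x \in D$. Once both inclusions are in hand, chaining them gives the full statement.

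The inner inclusion $BD \subseteq D$ is immediate: any element of $BD$ has the form $B\vec x$ with $\vec x \in D$, and since $B$ is a shrinkage matrix and $D$ is semi-balanced, the definition of semi-balancedness yields $B\vec x \in D$ directly.

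For the outer inclusion $ABD \subseteq BD$, the key observation is that $A$ and $B$, being diagonal, commute, so that $AB = BA$. Hence an arbitrary element of $ABD$, say $AB\vec x$ with $\vec x \in D$, can be rewritten as $B(A\vec x)$. Applying semi-balancedness to the shrinkage matrix $A$ gives $A\vec x \in D$, and therefore $B(A\vec x) \in BD$ by the definition of the image set. This shows $ABD \subseteq BD$, and combining it with the inner inclusion yields $ABD \subseteq BD \subseteq D$.

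The only point requiring care — and the single step that makes the argument land where it should — is the commutation $AB\vec x = B(A\vec x)$. Without pulling $A$ inside one could only conclude $ABD \subseteq D$ (since $AB$ is again a shrinkage matrix, its diagonal entries being products of numbers in $[0,1]$), which is strictly weaker than the claimed $ABD \subseteq BD$. Commutativity of diagonal matrices, together with the fact that $A\vec x$ remains in $D$, is exactly what lets us exhibit $AB\vec x$ as $B$ applied to a point of $D$, placing it inside $BD$ rather than merely inside $D$.
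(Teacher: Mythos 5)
Your proof is correct and follows exactly the same route as the paper's: $BD \subseteq D$ directly from semi-balancedness, and $ABD \subseteq BD$ by commuting the diagonal matrices to write $AB\vec x = B(A\vec x)$ with $A\vec x \in D$. Your closing remark on why commutativity is the essential step is a nice addition, but the argument itself is the paper's verbatim.
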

\begin{proof}
By definition of semi-balanced set, we have $BD \subseteq D$. Then it remains to prove that $ABD\subseteq BD$ (or equivalently, that $BD$ is a semi-balanced set). We observe that a vector $\vec w$ belongs to $ABD$ if and only if $\vec w=AB\vec b$ for some $\vec b\in D$. Because shrinking matrices commute, $\vec w=BA\vec b$. Let $\vec a = A\vec b$. By definition of semi-balanced set, $\vec a \in D$. Therefore, $\vec w = B \vec a$ for $\vec a \in D$, which means $\vec w$ belongs to $BD$ by definition of $BD$.
\end{proof}
 
\begin{lemma}[Monotonicity of the global sensitivity]
\label{lemma:w1_leq_w2_gw1_leq_gw2}
If $\vec w^\prime \leqslant \vec w$ then $S(R,\vec w^\prime)\leqslant
S(R,\vec w)$.
\end{lemma}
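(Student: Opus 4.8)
The plan is to realize the map $\vec d \mapsto f(\diag(\vec w')\cdot\vec d)$ as the map $\vec d \mapsto f(\diag(\vec w)\cdot\vec d)$ \emph{precomposed} with a single shrinkage matrix, and then to argue that precomposing any function with a shrinkage matrix can only decrease its global sensitivity. The hypothesis $\vec w' \leqslant \vec w$ is read coordinate-wise, with both vectors in $[0,1]^n$.

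First I would construct the shrinkage matrix. Set $A=\diag(\alpha_1,\dots,\alpha_n)$ with $\alpha_i = w_i'/w_i$ whenever $w_i>0$ and $\alpha_i=0$ whenever $w_i=0$; in the latter case $w_i'=0$ as well, since $0\leqslant w_i'\leqslant w_i=0$. Each $\alpha_i\in[0,1]$ because $0\leqslant w_i'\leqslant w_i$, so $A$ is a shrinkage matrix, and since diagonal matrices commute we get $\diag(\vec w)\,A=\diag(w_1\alpha_1,\dots,w_n\alpha_n)=\diag(\vec w')$. Writing $g=R(\cdot,\vec w)$, so that $g(\vec d)=f(\diag(\vec w)\cdot\vec d)$ and $S(R,\vec w)=S(g)$, this gives
\[
g(A\vec d)=f\bigl(\diag(\vec w)\,A\,\vec d\bigr)=f\bigl(\diag(\vec w')\cdot\vec d\bigr)=R(\cdot,\vec w')(\vec d)\enspace .
\]
Thus $R$ at $\vec w'$ is exactly $g$ precomposed with $A$, and it remains to prove $S(g\circ A)\leqslant S(g)$, where $g\circ A$ denotes $\vec d\mapsto g(A\vec d)$.

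For this sub-claim I would expand the sensitivity directly as
\[
S(g\circ A)=\max_{\vec d\sim\vec d^{(i)}\in D}\bigl\lvert g(A\vec d)-g(A\vec d^{(i)})\bigr\rvert\enspace .
\]
The key observation is that the diagonal matrix $A$ sends each neighboring pair to another neighboring pair inside $D$: if $\vec d\sim\vec d^{(i)}$ differ only in coordinate $i$, then for every $k\neq i$ we have $(A\vec d)_k=\alpha_k d_k=\alpha_k d_k^{(i)}=(A\vec d^{(i)})_k$, so $A\vec d$ and $A\vec d^{(i)}$ agree outside coordinate $i$ and are neighbors (\emph{cf.} Definition~\ref{def:neigh-prof}), while both lie in $D$ because $D$ is semi-balanced. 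Hence the set of image pairs $\{(A\vec d,A\vec d^{(i)}):\vec d\sim\vec d^{(i)}\in D\}$ is contained in the set of all neighboring pairs of $D$, so maximizing $\lvert g(\cdot)-g(\cdot)\rvert$ over the former is no larger than over the latter. Applying Proposition~\ref{proposition:maximization_on_smaller_domain} (monotonicity of subdomain maximization) yields $S(g\circ A)\leqslant S(g)$, and combined with the reduction above we conclude $S(R,\vec w')=S(g\circ A)\leqslant S(g)=S(R,\vec w)$.

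I expect the main obstacle to be bookkeeping around the degenerate cases rather than any deep idea. One must check that the construction of $A$ is legitimate when some $w_i=0$, and that neighbor pairs are genuinely preserved, including the degenerate case $\alpha_i=0$ where $A\vec d$ and $A\vec d^{(i)}$ collapse to an identical pair, which still counts as neighbors under Definition~\ref{def:neigh-prof}. The only conceptual ingredients are the semi-balanced property of $D$ (keeping the shrunk vectors inside the domain) and Proposition~\ref{proposition:maximization_on_smaller_domain}; everything else is routine verification.
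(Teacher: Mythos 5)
Your proof is correct and takes essentially the same route as the paper's: you construct the same shrinkage matrix (the paper's $C$ with $\vec w^\prime = C\vec w$), use the same commuting-diagonal factorization $\diag(\vec w^\prime)=\diag(\vec w)\,A$, and invoke the same Proposition~\ref{proposition:maximization_on_smaller_domain} after using semi-balancedness of $D$ to keep the shrunk vectors in the domain. If anything, your formulation of the containment at the level of neighboring \emph{pairs} (rather than the paper's point-set statement $T^\prime D \subseteq TD$ via Lemma~\ref{lem:two_shrinking_matrices}) is a slightly more careful rendering of the identical argument, since the sensitivity is a maximization over pairs, not points.
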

\begin{proof}
  Let $\vec c$ be such that
  \begin{equation}
  c_i =
  \begin{cases}
    w_i^\prime / w_i & \mbox{if } w_i \neq 0 \\
    0       & \mbox{otherwise}
  \end{cases},
\end{equation}
  and let $C = \diag(\vec c)$ is a shrinkage matrix. Then $\vec w^\prime = C \vec w$. Let $T^\prime=\diag(\vec w^\prime)$ and $T = \diag(\vec w)$ be two other shrinkage matrices. Notice that $T^\prime=CT$. By Lemma~\ref{lem:two_shrinking_matrices} and since $D$ is semi-balanced:
  \begin{equation}
  T^\prime D = C T D \subseteq T D \subseteq D.
\end{equation}
  
  The result follows from Proposition~\ref{proposition:maximization_on_smaller_domain} because $S(R,\vec w)$ is over the domain $T D$ while $S(R,\vec w^\prime)$ is a maximization problem over the domain $T^\prime D \subseteq T D$. %
\end{proof}

\begin{corollary}[Monotonicity of the modular global sensitivity]
\label{cor:w1_leq_w2_giw1_leq_giw2}
If $\vec w^\prime \leqslant \vec w$ then $S_i(R,\vec
w^\prime)\leqslant S_i(R,\vec w)$ for all $i$.
\end{corollary}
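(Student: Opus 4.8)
The plan is to adapt the proof of Lemma~\ref{lemma:w1_leq_w2_gw1_leq_gw2} essentially verbatim, exploiting the fact that the only difference between the full global sensitivity $S$ and the modular sensitivity $S_i$ is the \emph{set of neighboring profiles} over which the maximum is taken. Since the construction in that lemma manipulates the domain of maximization rather than the objective itself, restricting attention to neighbors that differ in a single fixed coordinate $i$ does not interfere with the argument.

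First I would reuse the same auxiliary shrinkage matrix $C = \diag(\vec c)$, where $c_j = w_j^\prime/w_j$ whenever $w_j \neq 0$ and $c_j = 0$ otherwise, so that $\vec w^\prime = C \vec w$ and hence $T^\prime = C T$ for $T = \diag(\vec w)$ and $T^\prime = \diag(\vec w^\prime)$. By Lemma~\ref{lem:two_shrinking_matrices} and the semi-balancedness of $D$, this immediately yields the chain $T^\prime D = C T D \subseteq T D \subseteq D$, exactly as before.

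The second step is to cast $S_i$ as a maximization over a domain so that Proposition~\ref{proposition:maximization_on_smaller_domain} applies. Writing $\vec u = T \vec d$ and $\vec u^\prime = T \vec d^{(i)}$, the quantity $S_i(R, \vec w)$ equals the maximum of $\lvert f(\vec u) - f(\vec u^\prime) \rvert$ over all pairs $(\vec u, \vec u^\prime)$ of points of $T D$ that differ only in coordinate $i$. Because $T^\prime D \subseteq T D$, every such $i$-neighboring pair drawn from $T^\prime D$ is also an $i$-neighboring pair in $T D$, so the feasible set for $S_i(R, \vec w^\prime)$ is contained in that for $S_i(R, \vec w)$. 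Applying Proposition~\ref{proposition:maximization_on_smaller_domain} to this fixed objective then gives $S_i(R, \vec w^\prime) \leqslant S_i(R, \vec w)$ for every $i$.

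The one point requiring a moment's care---the closest thing to an obstacle---is verifying that the ``differs only in coordinate $i$'' structure survives the transformation: after multiplying by the diagonal matrix $T$, the pair $(T\vec d, T\vec d^{(i)})$ still agrees in every coordinate except possibly $i$, even in the degenerate case $w_i = 0$ where the two images coincide and the difference simply vanishes (which only helps the inequality). Once this is observed, the subdomain containment and the appeal to Proposition~\ref{proposition:maximization_on_smaller_domain} close the argument precisely as in the unrestricted case.
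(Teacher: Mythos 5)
Your proposal is correct in substance, but it takes a genuinely different route from the paper. The paper derives the corollary from Lemma~\ref{lemma:w1_leq_w2_gw1_leq_gw2} by contradiction at the maximizing coordinate: setting $i^*=\argmax_i S_i(R,\vec w)$, it observes that $S_{i^*}(R,\vec w^\prime)>S_{i^*}(R,\vec w)$ would force $S(R,\vec w^\prime)>S(R,\vec w)$. You instead re-run the lemma's domain-restriction argument coordinate-wise. The difference is not merely stylistic: the paper's contradiction materializes only at $i=i^*$, since for $i\neq i^*$ the assumption $S_i(R,\vec w^\prime)>S_i(R,\vec w)$ yields only $S(R,\vec w^\prime)>S_i(R,\vec w)$, which is perfectly compatible with $S(R,\vec w^\prime)\leqslant S(R,\vec w)$ when $S_i(R,\vec w)$ sits strictly below the global maximum. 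A per-coordinate argument of the kind you give is thus what the quantifier ``for all $i$'' actually requires (and is what the later application in Lemma~\ref{lem:gsi_leq_psii_gs} needs); your approach buys a complete proof where the paper's buys brevity.

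One assertion in your write-up does need justification: that $S_i(R,\vec w)$ \emph{equals} the maximum of $\lvert f(\vec u)-f(\vec u^\prime)\rvert$ over \emph{all} pairs of points of $TD$ differing only in coordinate $i$. The inclusion you verified (images under the diagonal $T$ of $i$-neighbors in $D$ are $i$-neighbors in $TD$) is the easy direction, and it is the one needed at the $\vec w^\prime$ end of your chain; at the $\vec w$ end you implicitly use the converse: every $i$-neighboring pair $(\vec u,\vec u^\prime)$ in $TD$ must be realizable as $(T\vec d,T\vec d^{(i)})$ with $\vec d\sim\vec d^{(i)}$ in $D$. This is not automatic when $w_j=0$ for some $j\neq i$, because preimages of $\vec u$ and $\vec u^\prime$ are then only forced to agree on the coordinates with $w_j\neq 0$. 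The claim is nevertheless true: take the shrinkage matrix $A$ zeroing exactly the coordinates $j\neq i$ with $w_j=0$; for any preimages $\vec d,\vec e\in D$ of $\vec u,\vec u^\prime$, the vectors $A\vec d,A\vec e$ remain in $D$ (semi-balancedness), are genuine $i$-neighbors, and satisfy $TA\vec d=\vec u$ and $TA\vec e=\vec u^\prime$, since the coordinates of $\vec u,\vec u^\prime$ with $w_j=0$ already vanish. Alternatively, you can bypass the equality entirely with the matrix $C$ you already introduced: since $C$ is diagonal, $C\vec d\sim C\vec d^{(i)}$ are $i$-neighbors in $D$, so $(T^\prime\vec d,T^\prime\vec d^{(i)})=(T(C\vec d),T(C\vec d^{(i)}))$ exhibits every pair feasible for $S_i(R,\vec w^\prime)$ as a pair feasible for $S_i(R,\vec w)$, and Proposition~\ref{proposition:maximization_on_smaller_domain} concludes exactly as you intended. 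With either patch your argument is sound.
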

\begin{proof}
  By Lemma~\ref{lemma:w1_leq_w2_gw1_leq_gw2}, we have that $S(R,\vec w^\prime)$ $\leqslant$ $ S(R,\vec w)$.  Let $i^*$ $=$ $\argmax_i S_i(R
,\vec w)$ and therefore $S(R,\vec w)$ $=$ $S_{i^*}(R
,\vec w)$.  In order to get a contradiction, we assume that $S_{i^*}(R,\vec w^\prime)$ $> $ $ S_{i^*}(R,\vec w)$, thus we have 
\begin{equation}
S(R,\vec w^\prime) = \max_i S_i(R,\vec w^\prime) > S_{i^*}(R
,\vec w) = S(R,\vec w) ,
\end{equation}
which is a contradiction.%
\end{proof}

\begin{proof}[Proof of Lemma~\ref{lem:gsi_leq_psii_gs}] 
  Since $\vec w \leqslant (\vec 1_{-i},w_i)$ for all $i$, then: 
  \begin{align}
    S_i(R,\vec w)  & \leqslant S_i(R, (\vec 1_{-i},w_i)) \mbox{ for all } i,\\
    & \leqslant v_i S(f) \quad \text{for all $i$},
  \end{align}
where the first inequality follows by Corollary~\ref{cor:w1_leq_w2_giw1_leq_giw2} and the second inequality follows from the premise of the lemma, thus concluding the proof. %
\end{proof}  

\subsection{Proof of Theorem~\ref{thm:priv-priv-sett}}

\begin{proposition}[Semi-balanced sets are closed under shrinkage.] \label{cor:AD_in_D}
If $D$ is a semi-balanced set and $A$ is a shrinkage matrix, then $AD$ is also a semi-balanced set.
\end{proposition}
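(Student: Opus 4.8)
The plan is to unfold the definition of semi-balanced directly on the image set $AD$ and reduce everything to the commutativity of diagonal matrices together with the hypothesis that $D$ is already semi-balanced. Recall that $AD=\{A\vec x : \vec x \in D\}$, and that to show $AD$ is semi-balanced I must verify that for every shrinkage matrix $B$ and every $\vec y \in AD$ one has $B\vec y \in AD$.

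First I would fix an arbitrary $\vec y \in AD$ and an arbitrary shrinkage matrix $B$. By definition of $AD$, there exists $\vec x \in D$ with $\vec y = A\vec x$, so that $B\vec y = BA\vec x$. The key step is that $A$ and $B$ are both diagonal (shrinkage) matrices and therefore commute, yielding $B\vec y = AB\vec x$.

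Next I would invoke the hypothesis that $D$ is semi-balanced: since $B$ is a shrinkage matrix and $\vec x \in D$, the vector $\vec z := B\vec x$ lies in $D$. Substituting, $B\vec y = A\vec z$ with $\vec z \in D$, which is exactly the statement that $B\vec y \in AD$. Since $B$ and $\vec y$ were arbitrary, this establishes that $AD$ is semi-balanced.

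No step here is a genuine obstacle; the only point to check is that $B$ indeed qualifies as a shrinkage matrix (given) and that the composition of two diagonal matrices stays diagonal with coefficients in $[0,1]$, both of which are immediate. In fact this proposition is essentially a restatement of the inclusion $BD\subseteq D$ established inside the proof of Lemma~\ref{lem:two_shrinking_matrices} (there phrased as ``$BD$ is a semi-balanced set''), so an alternative would be to cite that lemma directly with the roles of the matrices relabeled.
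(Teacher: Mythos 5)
Your proof is correct and takes essentially the same route as the paper's: both arguments combine the commutativity of diagonal shrinkage matrices with the hypothesis $BD \subseteq D$, the paper phrasing it at the level of sets (multiplying $BD \subseteq D$ by $A$ to get $ABD \subseteq AD$, then commuting to $BAD \subseteq AD$) while you run the identical computation element-wise on $\vec y = A\vec x$. There is no gap.
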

\begin{proof}
$AD$ is semi-balanced if and only if for any shrinkage matrix $B$ the following is true: $B(AD)\subseteq AD$. Indeed, since $B$ is a shrinkage matrix and $D$ is a semi-balanced set, then $BD\subseteq D$. By multiplying both sides by a shrinkage matrix $A$, we obtain $ABD \subseteq AD$, which by the fact that $A$ and $B$ are commutative implies that $BAD\subseteq AD$. %
\end{proof}

\begin{proposition}[$T(\vec v)$ and $T(\vec v^{(i)})$ are neighbors]
\label{prop:tvd_tvid_neigh}
$T(\vec v) . \vec d$ and $T(\vec v^{(i)}) \cdot \vec d$ are neighbors on item $i$, since $w_i$ is a function of $v_i$ only.
\end{proposition}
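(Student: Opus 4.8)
The plan is to unwind the definitions and exploit the fact that the stretching matrix acts coordinatewise, with each diagonal entry depending only on the corresponding privacy weight. First I would recall from Section~\ref{sec:constructing-tv} that $T(\vec v) = \diag(\vec w)$, where each $w_j$ is obtained by solving the optimization problem \eqref{eq:opt}; crucially, the constraint $S_j(R,(\vec 1_{-j},w_j)) \leqslant v_j S(f)$ defining $w_j$ involves the privacy weight $v_j$ alone, since all other coordinates are pinned to $1$ on the left-hand side and the right-hand side depends only on $v_j$. Hence $w_j$ is a function of $v_j$ only, which is precisely the parenthetical justification asserted in the statement.

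Next I would use the hypothesis that $\vec v$ and $\vec v^{(i)}$ are neighboring privacy vectors, so by Definition~\ref{def:neigh-prof} they agree on every coordinate except $i$. Writing $\vec w$ for the diagonal of $T(\vec v)$ and $\vec w'$ for the diagonal of $T(\vec v^{(i)})$, the coordinatewise dependence established above immediately gives $w_k = w'_k$ for all $k \neq i$, while $w_i$ and $w'_i$ may differ.

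The final step is to transfer this to the stretched profiles. Since the matrices are diagonal, the $k$-th coordinate of $T(\vec v)\cdot \vec d$ is $w_k d_k$ and that of $T(\vec v^{(i)})\cdot \vec d$ is $w'_k d_k$. For every $k \neq i$ we have $w_k d_k = w'_k d_k$, so the two vectors coincide outside coordinate $i$. By Definition~\ref{def:neigh-prof} this is exactly the statement that $T(\vec v)\cdot \vec d$ and $T(\vec v^{(i)})\cdot \vec d$ are neighbors on item $i$.

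There is no real obstacle here: the entire content is the locality of the map $v_j \mapsto w_j$, and once that is spelled out the conclusion is immediate from the diagonal structure. The only point that deserves a sentence of care is confirming that the optimization \eqref{eq:opt} for $w_i$ genuinely decouples across coordinates — that its feasible region and objective reference $v_i$ and $w_i$ but none of the other weights — so that perturbing a single entry of the privacy vector perturbs at most a single entry of $\vec w$, and hence at most a single coordinate of the stretched profile.
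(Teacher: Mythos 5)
Your proof is correct and follows exactly the route the paper intends: the paper states this proposition without a separate proof, relying precisely on the observation you spell out --- that the optimization \eqref{eq:opt} defining $w_i$ involves only $v_i$ (the other weights being pinned to $1$), so changing one entry of $\vec v$ changes at most one diagonal entry of $T(\vec v)$ and hence at most one coordinate of $T(\vec v)\cdot\vec d$. Your expansion of the decoupling argument and the appeal to Definition~\ref{def:neigh-prof} is a faithful, fully detailed version of the paper's one-line justification.
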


\begin{proof}[Proof of Theorem~\ref{thm:priv-priv-sett}]
 Let $\vec d_* = T(\vec v) \cdot \vec d$ and $\vec d_*^{(i)} = T(\vec v^{(i)}) \cdot \vec d$. By Proposition~\ref{prop:tvd_tvid_neigh} (\emph{cf.} Section~\ref{sec:constructing-tv}), $\vec d_*$ and $\vec d_*^{(i)}$ are neighboring profiles. Moreover due to Proposition~\ref{cor:AD_in_D}, they still belong to $D$, thus $\lvert f(\vec d_*) -  f(\vec d_*^{(i)})\rvert \leqslant S_i(f) \leqslant S(f)$. Therefore, we have:
  \begin{align}
    \frac{\Pr[ \hat{f} (\vec d,\vec v,\varepsilon) = t ]}{\Pr[ \hat{f}
      (\vec d,\vec v^{(i)},\varepsilon) = t ]} & = \frac{h(t-R(f,\vec
      v)(\vec d))}{h(t-R(f,\vec v^{(i)})(\vec d))} \\ & \leqslant \exp
    ( \frac{\varepsilon \lvert R(f,\vec v)(\vec d) - R(f,\vec
      v^{(i)})(\vec d)\rvert}{S(f)} ) \\ & = \exp(\frac{\varepsilon
      \lvert f(\vec d_*) - f(\vec d_*^{(i)}) \rvert}{S(f)}) \\ & \leqslant
    \exp (\frac{\varepsilon \cancel{S(f)}}{\cancel{S(f)}}) = \exp
    (\varepsilon),
  \end{align}
where $h(\cdot)$ is defined in \eqref{eq:h}, thus proving the result.%
\end{proof}

\subsection{Proof of Theorem~\ref{thm:hdp}}

\begin{proof}[Proof of Theorem~\ref{thm:hdp}]
For all two neighboring profiles $\vec d,\vec d^{(i)}$, and for all outputs $t \in \mathbb R$ of the function $f$ we have
\begin{align} 
    \frac{\Pr[ \hat{f} (\vec d,\vec v,\varepsilon) = t ]}{\Pr[ \hat{f} (\vec d^{(i)},\vec v,\varepsilon) = t ]} & = \frac{h(t-R(f,\vec v)(\vec d))}{h(t-R(f,\vec v)(\vec
      d^{(i)}))} \\ & \leqslant \exp ( \frac{\varepsilon \lvert R(f,\vec v)(\vec d) -
      R(f,\vec v)(\vec d^{(i)})\rvert}{S(f)}) \\
  &  \leqslant \exp ( \frac{ \varepsilon S_i(R(f,\vec v))}{S(f)}) \\ & \leqslant \exp ( \frac{ \varepsilon v_i \cancel{S(f)}}{\cancel{S(f)}}) = \exp( \varepsilon v_i),
  \end{align}
where $h(\cdot)$ is defined in \eqref{eq:h},
  thus proving the result.%
\end{proof} 

\subsection{Proof of Theorem~\ref{thm:distortion}}
\begin{proof}[Proof of Theorem~\ref{thm:distortion}]
Let $\vec y=\vec d$ and $\vec x=T(\vec v)\cdot\vec d$, then by the
mean value theorem \cite[Theorem~14.4, p.~301]{linAlgInAct}, there exists a constant $0\leqslant c \leqslant 1$ (depending on $\vec d,T(\vec v)$, and $f$) such that:
\begin{equation}
f(\vec y) - f(\vec x) = \nabla f((1-c) \vec x + c \vec y) \cdot (\vec y-\vec x),
\end{equation}
where $\cdot$ denotes the scalar product. Therefore, by the Cauchy-Schwarz inequality, we have:
\begin{equation}
 \lvert f(\vec y) - f(\vec x) \lvert \leqslant \lVert \nabla f((1-c) \vec x + c \vec y) \rVert \lVert \vec y-\vec x \rVert.
\end{equation}

Finally, by the fact that
\begin{align}
\lVert \vec y - \vec x\rVert & = \lVert \vec d - T(\vec v)\cdot\vec d \rVert = \lVert (I-T(\vec v))\cdot\vec d \rVert \\ & = \sqrt{ \sum_i ((1-w_i) d_i )^2 } \\
& \leqslant (1-\min_i w_i) \sqrt{ \sum_i d_i^2 } = (1-\underline w) \lVert
\vec d \rVert, 
\end{align}
where $\underline w$ is the minimum of $\vec w$ (the diagonal of $T(\vec v)$),and that
\begin{align}
(1-c)\vec x + c \vec y & = (1-c) T(\vec v)\cdot \vec d + c \vec d
    \\ &
 = (c I + (1-c) T(\vec v)) \cdot \vec d,
\end{align}
the theorem follows directly.%
\end{proof}

\subsection{Proof of Lemma~\ref{lem:diag_v_premis_thm_hdp}}

\begin{proof}[Proof of Lemma~\ref{lem:diag_v_premis_thm_hdp}]
 Each profile being represented as a binary vector, the global sensitivity of the scalar product is one (\emph{i.e.}, $S(\mathsf{SP})=1$). Thereafter, for the sake of simplicity, let $R$ denotes $R(\mathsf{SP},\vec v)$. As $T(\vec v)$ is a diagonal matrix, it is strictly identical to its transpose $T(\vec v)^\top$. We can assume without loss of generality that $\vec d^{(i)}=(\vec x,\vec y^{(j)})$ for item $j=i-\dim(x)$, and therefore that:
\begin{align}
S_i(R) & = \max\limits_{\vec d \sim\vec d^{(i)}} \lvert T(\vx) \vec x \cdot
T(\vy) \vec y -   T(\vx) \vec x \cdot
T(\vy) \vec y^{(j)} \rvert \\ 
&=\max\limits_{\vec d \sim \vec d^{(i)}} \lvert  (\vec x^\top T(\vx)  
T(\vy)) \cdot (\vec y - \vec y^{(j)}) \rvert,
\end{align}
however the vector $\vec y - \vec y^{(j)}$ has all its coordinates set to $0$ except for the $j^{th}$ coordinate. Therefore, the maximum is reached when $y_j=1$, $\vec x=\vec 1=(1,\cdots,1)$, and is such that:
\begin{equation}
\vx_j \vy_j \leqslant v_i =  v_i \times 1 = v_i S(\mathsf{SP}),
\end{equation}
which concludes the proof.%
\end{proof}

\subsection{Proof of Lemma~\ref{lem:post-processing}}

\begin{proof}[Proof of Lemma~\ref{lem:post-processing}]
The theorem is equivalent to prove that for any two neighboring profiles $\vec d \sim \vec d^{(i)}$ the following holds:
\begin{equation}
 \Pr[g\circ\hat f(\vec d) = t] \leqslant \exp(\varepsilon v_i) \Pr[g\circ\hat f(\vec d^{(i)}) = t].
\end{equation}
  To prove this, consider any two neighboring profiles $\vec d \sim \vec d^{(i)}$:
  \begin{align}
&  \Pr[g\circ\hat f(\vec d) = t]  = \int\limits_{s\in
      \mathsf{Range}(\hat f)}
    \Pr[\hat f(\vec d) = s]\cdot\Pr[g(s)=t] \\
    & \leqslant \int\limits_{s\in \mathsf{Range}(\hat f)}
    \exp(\varepsilon v_i)
    \Pr[\hat f(\vec d^{(i)}) = s]\cdot\Pr[g(s)=t] \\
    & = \exp(\varepsilon v_i) \Pr[g\circ\hat f(\vec d^{(i)}) = t],
  \end{align}
thus concluding the proof.%
\end{proof}

\end{document}